\documentclass[11pt]{article}
\usepackage{setspace}
\usepackage[T1]{fontenc}
\usepackage[utf8]{inputenc}
\usepackage{authblk}
\usepackage{fullpage}

\usepackage{float}
\usepackage{tabularx}
\usepackage{amsthm}
\usepackage{amsmath}
\usepackage{amssymb}
\usepackage{amsfonts}
\usepackage{mathtools}
\usepackage{enumitem}
\usepackage{comment}
\usepackage[pagebackref]{hyperref}
\usepackage{url}
\usepackage[sort,numbers]{natbib}

\usepackage{multirow}
\usepackage{tikz}
\usetikzlibrary{mindmap}
\usetikzlibrary{arrows}

\usepackage[capitalize]{cleveref}

\usepackage{todonotes}

\usepackage{thm-restate}

\usepackage[font=small]{caption} 
\newtheorem{theorem}{Theorem}
\newtheorem{lemma}{Lemma}

\newtheorem{corollary}{Corollary}

\theoremstyle{definition}

\usepackage{thmtools}
\usepackage{thm-restate}

\newcommand{\threefield}[3]{$#1\mid#2\mid#3$}

\newcounter{dvircounter}
\newcounter{dannycounter}
\newcommand{\Dvir}[1]{\noindent\textcolor{red}{$\ll$\textsf{#1}$\gg$\marginpar{\tiny\bf 
\textcolor{red}{\refstepcounter{dvircounter}Dvir~\thedvircounter}}}}
\newcommand{\Danny}[1]{\noindent\textcolor{blue}{$\ll$\textsf{#1}$\gg$\marginpar{\tiny\bf 
\textcolor{blue}{\refstepcounter{dannycounter}Danny~\thedannycounter}}}}

\title{Fast Makespan Minimization via Short ILPs}

\author[1]{Danny~Hermelin}
\author[1]{Dvir~Shabtay}

\affil[1]{\small Department of Industrial Engineering and Management, Ben-Gurion University of the Negev, Israel,
\texttt{hermelin@bgu.ac.il, dvirs@bgu.ac.il}}

\date{}

\begin{document}

\maketitle

\abstract{
Short integer linear programs are programs with a relatively small number of constraints. We show how recent improvements on the running-times of solvers for such programs can be used to obtain fast pseudo-polynomial time algorithms for makespan minimization on a fixed number of parallel machines, and other related variants. The running times of our algorithms are all of the form $\widetilde{O}(p^{O(1)}_{\max}+n)$ or $\widetilde{O}(p^{O(1)}_{\max} \cdot n)$, where $p_{\max}$ is the maximum processing time in the input. These improve upon the time complexity of previously known algorithms for moderate values of $p_{\max}$. 
}


\section{Introduction}
\label{sec:introduction}%

Recent breakthroughs on the running times of algorithms for solving integer linear programs (ILPs) have made a significant impact on algorithm design. So far, there have been essentially two strands of this research enterprise. The first began with Lenstra~\cite{Lenstra83}, who showed that ILPs with a parameterized number of variables, aka \emph{thin ILPs}, can be solved in fixed-parameter time. Over the years, there have been several improvements to this result~\cite{DadushPV11,HildebrandKoppe2013,Kannan87}, culminating in an algorithm with time $\log^{O(N)} N \cdot \log^{O(1)}\Delta$ for ILPs with $N$ variables and encoding size~$\Delta$~\cite{ReisRothvoss23}. The second strand began with Papadimitriou~\cite{Papadimitriou81} who showed that ILPs with a fixed number $m$ of constraints, or \emph{short ILPs}, can be solved in pseudopolynomial-time. In a breakthrough result, Eisenbrand and Weismantel~\cite{EisenbrandWeismantel2020} improved this using the Steinitz Exchange Lemma to fixed-parameter time when parameterizing by the number of constraints and the maximum lefthand-side coefficient $a_{\max}$. Recently, Jansen and Rohwedder~\cite{JansenRohwedder} improved this result for unbounded ILPs (see Section~\ref{sec:ShortILPs}), culminating in an elegant $O((\sqrt{M} \cdot a_{\max})^{2M} + MN)$-time algorithm for ILPs with $M$ constraints and maximal coefficient constraint~$a_{\max}$. 

In this paper, we demonstrate how short ILP solvers can be leveraged to obtain efficient pseudo-polynomial time algorithms for makespan minimization and related scheduling problems. While previous research in this domain has primarily focused on solvers for thin ILPs (\emph{e.g.}~\cite{HeegerM25,Hermelin2019,Hermelin2022,Karhi21,KaulMM24,Mnich2015}), Jansen and Rohwedder showed that their solver can yield a fast EPTAS for minimizing the makespan on a set identical parallel machines~\cite{JansenRohwedder}. We extend this line of research by showing that short ILP solvers are equally powerful for constructing fast pseudo-polynomial time algorithms for both the basic makespan problem and its generalizations. 

\paragraph{The standard ILP for makespan minimization.}

Recall that in the \emph{makespan minimization} problem, we are given a set $J=[n]$\footnote{We use $[n]$ to denote the set of natural numbers $\{1,\ldots,n\}$ for any $n \in \mathbb{N}$} of~$n$ jobs to be processed non-preemptively on a set of $m$ parallel machines $M_1,\ldots,M_m$, with the goal of minimizing the completion time of last job to be processed (\emph{i.e.} the makespan). Throughout the paper, we will focus on the case where $m=O(1)$; that is, the number of machines is a fixed constant independent of the number of jobs~$n$. The processing time of job~$j$ on machine $M_i$ is denoted by $p_{i,j} \in \mathbb{N}$. A \emph{schedule} is an assignment $\sigma: [n] \to [m]$ of jobs to machines, and the \emph{makespan} of machine $M_i$ under $\sigma$ is defined as $M_i(\sigma)=\sum_{j\in\sigma^{-1}(i)}p_{i,j}$. Our goal is to compute a schedule with minimum makespan, where the makespan of the schedule $\sigma$ is $C_{\max}(\sigma)=\max_{i \in [m]}M_i(\sigma)$. This problem is denoted by \threefield{Rm}{}{C_{\max}} in the classical three-field notation of scheduling problems~\cite{GrahamEtAl79}. The special case of this problem where $p_{i,j}=p_j/q_i$, and $q_i$ is the \emph{machine speed} of~$M_i$, is called the \emph{makespan minimization on related machines} problem, and is denoted by \threefield{Qm}{}{C_{\max}}.

The standard textbook ILP formulation for \threefield{Rm}{}{C_{\max}} has $nm$ binary variables $x_{i,j} \in \{0,1\}$, one for each for job-machine pair, where $x_{i,j} =1$ indicates that job~$j$ is scheduled on machine~$M_i$ (\emph{i.e.}, $\sigma(j)=i$). Letting $C$ denote the target makespan, the constraints of this ILP are then given by:
\begin{alignat}{3}
\tag{$\text{ILP}_0$}
& &\quad & \sum_{j \in [n]} p_{i,j}x_{i,j}  \;\leq\; C & \qquad \qquad & \forall \, i \in [m] \notag  \\
& &\quad & \sum_{i \in [m]} x_{i,j}  \;=\; 1 & \qquad \qquad & \forall \, j \in [n] \notag  \\
& \qquad \qquad & & x_{i,j} \in  \{0,1\} & \qquad \qquad & \forall \, i \in [m],\, j \in[n]\notag
\end{alignat}
Any feasible solution to this ILP translates naturally to a schedule $\sigma$ of makespan at most $C$ by setting $\sigma(j)=i$ iff $x_{i,j}=1$ for all jobs $j\in J$. This formulation was used in the seminal paper by Lenstra, Shmoys, and Tardos~\cite{LenstraShmoysTardos90} to obtain an FPTAS for \threefield{Rm}{}{C_{\max}}, along with a 2-approximation for the case where $m$ is unbounded.

\paragraph{Our contribution.} 

Observe that the constraint matrix of the standard ILP for \threefield{Rm}{}{C_{\max}} has $n+m$ rows, which means that it is not short even though $m=O(1)$. We show that when considering the special case \threefield{Qm}{}{C_{\max}} problem, one can replace the $n$ constraints of the form $\sum_i x_{i,j} =1$ by a single constraint $\sum_i \sum_j x_{i,j} =n$ along with a carefully constructed objective function. This gives a constraint matrix with $m+1$ rows and maximum coefficient $p_{\max}$, and so we can use the short ILP solver of Jansen and Rohwedder~\cite{JansenRohwedder} to obtain the main result of the paper:

\begin{restatable}{theorem}{Cmax}
\label{thm:main}%
$Qm||C_{\max}$ can be solved in $\widetilde{O}(p^{2m+2}_{\max} + n)$ time.
\end{restatable}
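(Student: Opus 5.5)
The plan is to compress the instance by processing time, write a short unbounded ILP for each guessed makespan, and call the solver of Jansen and Rohwedder. First I bucket the jobs by processing time: for each $t \in [p_{\max}]$ let $n_t$ be the number of jobs with $p_j=t$. This takes $O(n)$ time, and afterwards the instance has size $O(p_{\max})$. A schedule is then described by nonnegative integers $x_{i,t}$, the number of size-$t$ jobs placed on $M_i$. Since $p_{i,j}=p_j/q_i$, a makespan bound $C$ becomes the constraint $\sum_t t\,x_{i,t}\le \lfloor C q_i\rfloor$ for each machine, which I write as an equality with a slack variable $s_i\ge 0$.

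I then replace the $p_{\max}$ constraints $\sum_i x_{i,t}=n_t$ by the single constraint $\sum_i\sum_t x_{i,t}=n$ and an objective. The key observation is a \emph{domination} lemma. Let $y_t=\sum_i x_{i,t}$ be the type counts of a feasible ILP solution. Suppose the prefix condition $\sum_{t\ge k} y_t\ge \sum_{t\ge k} n_t$ holds for every $k$. Then the $\ell$-th largest ILP job is at least the $\ell$-th largest real job for every $\ell$. Matching them in sorted order and substituting real jobs for ILP jobs never increases any machine load, so we get a real schedule of makespan at most $C$. The converse is trivial, since the real assignment is itself a feasible ILP solution with $y=n$.

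It therefore suffices to design an objective $\sum_{i,t} w_t x_{i,t}$ whose maximizer over the feasible region dominates $(n_t)$ whenever some feasible point does. I would first try weights increasing steeply in $t$, for example $w_t=(n+1)^t$, so that the solver favours large jobs lexicographically. This step is the main obstacle, because lexicographic maximality alone does \emph{not} imply prefix domination. For instance, $\{3,3,1\}$ beats $\{3,2,2\}$ lexicographically but does not dominate it. Two repairs seem natural:
\begin{enumerate}
\item change variables to prefix form, letting the ILP choose how many of the top-$k$ jobs go where, so that $t$-coefficients become differences;
\item use concave-type weights tuned to $n_t$, so that an optimal value of $\sum_t w_t n_t$ certifies domination. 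Here one checks that every non-dominating feasible vector has strictly smaller value, by an exchange argument that moves a unit from a large type to a smaller deficient one.
\end{enumerate}
I would settle this by the exchange argument before anything else, since the rest is routine.

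Given a correct objective, the ILP has $M=m+1$ rows and $N=mp_{\max}+m$ columns. Its coefficients are $t\le p_{\max}$ and $1$, so $a_{\max}=p_{\max}$. The Jansen--Rohwedder solver then runs in $O((\sqrt{m+1}\,p_{\max})^{2m+2}+mp_{\max})=O(p_{\max}^{2m+2})$ time for constant $m$; large weights only add polylogarithmic arithmetic overhead. An outer binary search over $C$ adds a further logarithmic factor. The search ranges over the candidate values $\{L/q_i\}$ with integer loads $L\le np_{\max}$, and a feasibility test at each candidate suffices. Together with the $O(n)$ bucketing, this gives the claimed $\widetilde O(p_{\max}^{2m+2}+n)$ bound.
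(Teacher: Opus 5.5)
Your proposal has a genuine gap at exactly the step you flag, and neither repair closes it. The parts around that step are fine: bucketing by size, the domination lemma (if $\sum_t y_t=n$ and $\sum_{t\ge k}y_t\ge\sum_{t\ge k}n_t$ for all $k$, real jobs can be substituted without raising any load), and the running-time accounting for an $(m+1)$-row ILP with $a_{\max}=p_{\max}$. But the objective that replaces the $p_{\max}$ count constraints is the whole proof, and you leave it open.

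\textbf{Repair~(2) cannot work for any per-size weights $w_t$; the obstruction is convexity, not the tuning.} Take jobs of sizes $1,2,2,3$, so $(n_1,n_2,n_3)=(1,2,1)$, on two identical machines with $C=4$. The schedules $(\{2,2\},\{2,2\})$ and $(\{1,3\},\{1,3\})$ give feasible, non-dominating count vectors $(0,4,0)$ and $(2,0,2)$. Their average is $(1,2,1)$, so for every $w$ one of them has value at least $\sum_t w_tn_t$. Yet $(1,2,1)$ is the only dominating feasible vector: domination forces total size at least $8=2C$, with equality only at $(n_t)$. Hence no $w$ guarantees a dominating maximizer.

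Falling back to the bare test ``$\max\ge\sum_t w_tn_t$ implies feasible'' also fails for weights determined by the jobs. On four machines of capacity $2$, the vector $(0,4,0)$ is feasible but the real jobs are not, which forces $2w_2<w_1+w_3$. With speeds $(3,3,1,1)$ and $C=1$, the vector $(2,0,2)$ is feasible but the real jobs are not (the jobs of sizes $2,2,3$ need distinct machines of capacity $3$), which forces $w_1+w_3<2w_2$. These two requirements contradict each other.

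Your exchange argument cannot certify the needed inequality either. Moving a unit from a large type down to a deficient one \emph{lowers} $\sum_t w_ty_t$, so it says nothing about the value of the original vector.

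\textbf{Repair~(1) is not developed.} A prefix parametrization, say $U_{i,k}$ the number of jobs of size $\ge k$ on $M_i$, still needs a constraint $\sum_i U_{i,k}\ge\sum_{t\ge k}n_t$ for every threshold $k$. The program is then no longer short.

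\textbf{The paper's route does not fill the gap.} Its proof is precisely the lexicographic approach you rejected: one variable per job--machine pair with jobs in SPT order, weights $n^j$, the test $Z\ge 0$, and an exchange argument (Lemma~\ref{lem:ILP1}). Your objection applies to it. Take $p=(1,4,5,6)$ on two identical machines with $C=7$, and set $x_{1,1}=x_{1,4}=x_{2,1}=x_{2,4}=1$. This is feasible for $\text{ILP}_1$ with $Z=2\cdot4+2\cdot4^4-(4+4^2+4^3+4^4)=180>0$, yet the total processing time $16$ exceeds $2C$.

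The paper's argument breaks at ``continuing in this way repeatedly''. Each exchange lowers $Z$ by $n^{j_2}-n^{j_1}$, so the hypothesis $Z\ge 0$ that forced $j_1<j_2$ is lost after the first step. In the example, the first exchange yields per-job counts $(2,0,1,1)$ with $Z=-12$, and the next exchange would have to move a unit upward.

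So proving the stated bound needs a genuinely different way to enforce the per-size counts with $O(m)$ rows. If a weaker exponent is acceptable, an $n$-fold formulation works: use the size classes as bricks, $m$ linking load rows, and the single local row $\sum_i x_{i,t}=n_t$. This gives $\widetilde O(p_{\max}^{O(m^2)}+n)$.
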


We then turn to consider three generalizations of \threefield{Qm}{}{C_{\max}}. In the first, denoted as the \threefield{Qm}{cap}{C_{\max}} problem~\cite{Woeginger05,YangYeZhang2003}, each machine has a \emph{capacity}~$b_i$, and a feasible schedule $\sigma$ is required to have $|\sigma^{-1}(i)| \leq b_i$ for all $i\in[m]$. In the \threefield{Qm}{rej}{C_{\max}} problem, each job~$j$ has a \emph{weight}~$w_j \in \mathbb{N}$ and we are allowed to reject jobs of total weight at most some prespecified $W \in \mathbb{N}$. That is, we search for a subset of jobs $J' \subseteq J$ that admits a schedule of makespan at most $C$ where $\sum_{j \in J \setminus J'} w_j \leq W$. In the third variant, the \threefield{Qm}{r_j \geq 0}{C_{\max}} problem, each job~$j$ has a \emph{release date}~$r_j \in \mathbb{N}$, and jobs are required to be scheduled no earlier than the release dates (this requires a slightly different definition of a schedule, see Section~\ref{sec:ReleaseDates}). We show that the ILP we use for \threefield{Qm}{}{C_{\max}} can be adapted to each of these variants. 

For the case of unrelated machines, we consider the special case where $p_{i,j} \in \{p_j, \infty\}$ for all $i \in [m]$ and $j \in [n]$; that is, the processing time of each job $j$ is $p_j$ on all machines, but some machines cannot process $j$. For this variant, denoted by \threefield{Rm}{p_{i,j}\in\{p_j,\infty\}}{C_{\max}}, we provide an ILP with an exponential number of constraints in~$m$, which is still constant since $m=O(1)$. Finally, for the special case of two machines, the \threefield{R2}{}{C_{\max}} problem, we show that an alternative ILP formulation yields an $\widetilde{O}(p^{6}_{\max} \cdot n)$ time algorithm. Table~\ref{tab:results} summarizes our results, and compares them with the previously best bounds that are discussed below.

\begin{table}[bt]
\begin{center}
\begin{tabular}{c|c|c}
 \quad  Variant  \quad  & \quad Previous Fastest \quad & New Result  \quad   \\
\hline
\hline
\threefield{Qm}{}{C_{\max}} & $\widetilde{O}(p^{m-1}_{\max} \cdot n^{m-1})$ & $\widetilde{O}(p^{2m+2}_{\max} + n)$ \\ 
\hline
\threefield{Qm}{cap}{C_{\max}} & $O(p^{m-1}_{\max}\cdot n^{2m-2})$~\cite{Woeginger05} & $\widetilde{O}(p^{4m+2}_{\max} + n)$ \\ 
\hline
\threefield{Qm}{rej}{C_{\max}} & $O(p^m_{\max} \cdot n^m)$~\cite{Zhang09} & $\widetilde{O}((p_{\max}+w_{\max})^{(m+2)(m+3)} \cdot n)$ \\ 
\hline
\threefield{Qm}{r_j \geq 0}{C_{\max}} & $O(p^{m-1}_{\max} \cdot n^{m-1})$~\cite{LawlerLenstraKanShmoys1993} & $\widetilde{O}(p^{2r_{\#}(m+1)}_{\max} + r_{\#}n)$ \\ 
\hline
\hline
\threefield{Rm}{p_{i,j} \in \{p_j,\infty\}}{C_{\max}} & $O(p^{m-1}_{\max} \cdot n^{m-1})$~\cite{Sahni76} & $\widetilde{O}(p^{2^{m+1}+2m-2)}_{\max} + n)$ \\ 
\hline
\threefield{R2}{}{C_{\max}} & $O(p_{\max} \cdot n^2)$~\cite{HorowitzSahni76} & $\widetilde{O}(p^{6}_{\max} \cdot n)$ \\ 
\end{tabular}
\end{center}
\caption[]{A summery of our results compared with the previously fastest known algorithms, in the setting where only $p_{\max}=\max_{i,j} p_{i,j}$ and $n$ are taken as parameters. We use~$r_{\#}$ to denote the number of distinct release dates $|\{r_1,\ldots,r_n\}|$ of the input set of jobs. The first result of the second column is due to fast polynomial multiplication.}
\label{tab:results}
\end{table}

\paragraph{Previous Work.} 

The problems studied in this paper can be viewed as extensions of \threefield{Q2}{}{C_{\max }}, which is known to be NP-hard even when machines have equal unit speeds (\emph{i.e.}, the \threefield{P2}{}{C_{\max }} problem)~\cite{LenstraRinnooy77}. This hardness stems from its fundamental equivalence to the partition and subset sum problems. Historically, two primary techniques have been used to obtain pseudopolynomial-time algorithms for \threefield{Qm}{}{C_{\max }} and \threefield{Rm}{}{C_{\max }}. The first is dynamic programming, pioneered by Sahni~\cite{Sahni76} and Horowitz and Sahni~\cite{HorowitzSahni76}, which yields a running time of $O(P^{m-1}\cdot n)$, where $P$ is the maximum total processing time on any machine. The second leverages fast polynomial multiplication (FFT) to achieve a running time of $\widetilde{O}(P^{m-1})$\footnote{The notation $\widetilde{O}()$ is used to suppress polylogarithmic factors.}. This approach was applied to subset sum by Koiliaris and Xu~\cite{KoiliarisXu2019}, and their ideas generalize naturally to the \threefield{Qm}{}{C_{\max}} problem, and to the \threefield{Rm}{}{C_{\max}} problem at the cost of an additional multiplicative factor of~$P$ to the time complexity. 

While polynomial multiplication does not appear to extend to the other variants considered here, Sahni’s dynamic programming approach handles them more naturally. For \threefield{Qm}{cap}{C_{\max}}, this framework yields an $O(P^{m-1}\cdot n^{m-1})$ time algorithm, as noted by Woeginger~\cite{Woeginger05}. For \threefield{Qm}{rej}{C_{\max}}, Zhang \emph{et al.}~\cite{Zhang09} provide an $O(P^{m}\cdot n)$ time algorithm, while the \threefield{Qm}{r_j \geq}{C_{\max}} variant can be solved in $O(P^{m-1}\cdot n)$ time as shown in~\cite{LawlerLenstraKanShmoys1993}. Together, all results discussed above provide the baseline for evaluating our new algorithms.

Our algorithms outperform the above results in cases where $p_{\max}$ is sublinear in~$n$. For example, when $p_{\max}=O(\sqrt{n})$, the dynamic programming of Sahni~\cite{Sahni76} runs in $O(n^{\frac{3m-1}{2}})$ time, whereas our algorithm achieves a superior $\widetilde{O}(n^{m+1})$ running time for $m > 4$. The runtime obtained by fast polynomial multiplication is $O(n^{\frac{3}{2}(m-1)})$, which is slower than our algorithm for $m>5$. When $p_{\max}=O(n^{1/4})$, our algorithm becomes faster already for $m>2$. Generally speaking, in the setting where only $p_{\max}$ and $n$ are taken as parameters, we can translate the upper bounds above by replacing~$P$ with $p_{\max} \cdot n$, as is done in Table~\ref{tab:results}. As a final note, we mention that at the cost of an increase to their running times, most of our algorithms can be easily adapted so that they only use  polynomial space, using the recent space-efficient short ILP solver of Rohwedder and Wegrzycki~\cite{RohwedderWegrzycki25}. In contrast, both the dynamic programming approach, as well as the polynomial multiplication approach, seem to inherently require exponential space.

\section{Short ILPs}
\label{sec:ShortILPs}%

We next briefly review the state of the art regarding short ILPs, and present the ILP solvers that will be used throughout the paper. Let $A \in \mathbb{Q}^{M \times N}$ be an $M \times N$ matrix of rational values,~$b \in \mathbb{Q}^{M}$ be a rational vector of dimension $M$, and $c \in \mathbb{Q}^{N}$ be a rational vector of dimension $N$. Also, let $u \in \mathbb{N}^{N}$ be a vector of non-negative integers of dimension $N$. The \textsc{ILP-bounded} problem asks to determine the minimum value of $c^\top x$ over all natural vectors~$x$ of dimension~$N$ that satisfy $Ax=b$ and $x \leq u$. That is, 
$$
\textsc{ILP-bounded} = \min \{c^\top x: x \in \mathbb{N}^N,\, x \leq u,\text{  and  }Ax=b\}.
$$
Here and throughout we assume that $0 \in\mathbb{N}$. Let $a_{\max}$ denote the maximal absolute value of any entry in $A$. The seminal paper of Eisenbrand and Weismantel proves the following:
\begin{theorem}[\cite{EisenbrandWeismantel2020}]
\label{thm:ILP-bounded}%
\textsc{ILP-bounded} can be solved in $\widetilde{O}(M^{(M+1)^2} \cdot a_{\max}^{M(M+1)} \cdot n)$ time.  
\end{theorem}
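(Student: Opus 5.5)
This statement is Eisenbrand and Weismantel's bound for \textsc{ILP-bounded}. I would prove it in two stages: first an algorithm for the unbounded version, then a reduction of the bounded version to it through a proximity argument.

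\emph{Stage 1: unbounded ILPs via Steinitz.} For $\max/\min\{c^\top x : Ax=b,\ x\in\mathbb{N}^N\}$, take an optimal solution $x^*$. Write $Ax^*=b$ as a sequence of column vectors, each of $\ell_\infty$-norm at most $a_{\max}$, summing to $b$. The Steinitz Exchange Lemma lets us reorder these vectors so that every prefix sum lies within $\ell_\infty$-distance $M\cdot a_{\max}$ of the line segment from $0$ to $b$. So some optimal solution is a walk from $0$ to $b$ in a graph whose vertices are the integer points near the segment. A cycle in this walk would have nonnegative cost by optimality, so a shortest walk of length $\|x^*\|_1$ exists.

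This walk has too many steps when $b$ is large, so I would apply the Steinitz argument only to the set of points reachable within $O(Ma_{\max})$ of the segment, and use divide-and-conquer on $b$. Concretely, split a solution of $b$ into two halves summing to roughly $b/2$ each, up to a Steinitz offset. Solve recursively over all targets within distance $O(Ma_{\max})$ of $b/2$, and combine the two halves by a $(\max,+)$-convolution over the box of size $(Ma_{\max})^{M}$. The recursion depth is $\log\|b\|$. Each level costs $((Ma_{\max})^{M})^2$, up to a polynomial factor in $M$, since that is the number of offset pairs. This gives $\widetilde{O}((M a_{\max})^{2M})$ time, plus $O(N)$ to deduplicate the columns to at most $(2a_{\max}+1)^M$ distinct ones.

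\emph{Stage 2: bounded ILPs via proximity.} Solve the LP relaxation with the bounds $x\le u$ to get a vertex $x^{LP}$. At most $M$ of its coordinates are strictly between their bounds. The proximity theorem, again a Steinitz argument applied to $x^*-x^{LP}$, states that some optimal integral $x^*$ satisfies $\|x^*-x^{LP}\|_1\le M(2Ma_{\max}+1)^M$. We then substitute $x = \lfloor x^{LP}\rfloor + y$, with $y$ ranging over a box of width $O(M(2Ma_{\max}+1)^M)$ in each coordinate. The result is an ILP with right-hand side $b' = b - A\lfloor x^{LP}\rfloor$, and $\|b'\|_\infty \le a_{\max}\cdot M(2Ma_{\max}+1)^M$.

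For this small instance I would run the dynamic program over partial sums directly. Process the columns one at a time, and for each column take each allowed multiplicity up to its bound, using the usual binary-splitting trick. By proximity, the states are the vectors of norm at most $O(M\cdot\|y\|_1 a_{\max})$, so there are $(M(Ma_{\max})^{M})^{M}\approx M^{M(M+1)}a_{\max}^{M^2}$ states. Each state is updated once per column, for $N$ columns, with a $\log$ factor. This yields the claimed bound $\widetilde{O}(M^{(M+1)^2}a_{\max}^{M(M+1)}\cdot N)$. Stage 1 is then not strictly needed, since Stage 2's dynamic program handles everything once proximity has been established.

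\emph{Main obstacle.} The proximity step is the hardest part. We must show that if $\|x^*-x^{LP}\|_1$ were large, we could decompose $x^*-x^{LP}$ into a sequence of signed columns and use Steinitz to extract a zero-sum subsequence. That subsequence, removed from $x^*$ or added to $x^{LP}$, keeps both solutions feasible, including the bounds $0\le x\le u$, since the signs respect the direction of each coordinate difference. It also does not worsen $c^\top x^*$, by LP optimality of $x^{LP}$. The pigeonhole count on the prefix sums is what gives the $(2Ma_{\max}+1)^M$ bound. I would check carefully that the fractional part of $x^{LP}$ only adds $M$ fractional vectors to the sequence.
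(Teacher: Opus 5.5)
The paper gives no proof of this statement; it cites it from Eisenbrand and Weismantel. Your Stage~2 is essentially their argument and is sound: an LP vertex with at most $M$ fractional coordinates, Steinitz-based proximity $\|x^*-x^{LP}\|_1\le M(2Ma_{\max}+1)^M$ (obtained, as you say, from LP optimality applied to an optimal $x^*$ closest to $x^{LP}$), the shift $y=x-\lfloor x^{LP}\rfloor$, and a column-by-column dynamic program with binary splitting over a box of partial sums. Stage~1 plays no role and can be dropped.

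Two small corrections:
\begin{enumerate}
\item Your state count loses a factor of $a_{\max}$. The box radius is about $a_{\max}\|y\|_1\approx M a_{\max}(2Ma_{\max}+1)^M$, so there are $O(M)^{M(M+1)}\cdot O(a_{\max})^{M(M+1)}$ states, not $M^{M(M+1)}a_{\max}^{M^2}$. It is this count, times $N$ columns and $O(M\log(Ma_{\max}))$ bit-layers per column, that yields the exponent $M(M+1)$ on $a_{\max}$ in the claimed bound.
\item You never say how the vertex $x^{LP}$ is computed or what it costs. Since the claimed running time is linear in $n$, you should either account for this step or cite a suitable LP algorithm for it.
\end{enumerate}
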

\noindent In~\cite{KnopPW20}, it is shown that even for binary constraint matrices $A$, there is no $2^{o(M \log M)} \cdot (n + ||b||_{\infty})$ time algorithm for \textsc{ILP-bounded}, assuming the Exponential Time Hypothesis.  

Jansen and Rohwedder~\cite{JansenRohwedder} showed that the algorithm of Eisenbrand and Weismantel can be substantially improved if one removes the upper-bound constraint vector $u \in \mathbb{N}^{N}$ from the above ILP formulation. More specifically, let the \textsc{ILP-unbounded} problem be defined as follows:
$$
\textsc{ILP-unbounded} = \min \{c^\top x: x \in \mathbb{N}^N \text{  and  }Ax=b\}.
$$
Jansen and Rohwedder proved the following: 
\begin{theorem}[\cite{JansenRohwedder}]
\label{thm:ILP-unbounded}%
\textsc{ILP-unbounded} can be solved in $\widetilde{O}(M^M \cdot a_{\max}^{2M}+ MN)$ time. 
\end{theorem}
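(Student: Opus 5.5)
The plan is to follow the ``divide and conquer by halving the right-hand side'' strategy that underlies the Jansen--Rohwedder result, replacing the Eisenbrand--Weismantel dynamic program over all partial sums by a logarithmic number of $(\min,+)$-convolutions over small boxes. First, a preprocessing step handles the $MN$ term. Since the program is unbounded, two identical columns of $A$ are interchangeable, and only the one with the smaller cost matters. Hence we scan the $N$ columns once, in $O(MN)$ time, and keep for each distinct column vector $v\in\{-a_{\max},\ldots,a_{\max}\}^M$ only its cheapest cost. After this step every further quantity depends only on $M$ and $a_{\max}$ (and polylogarithmically on $\|b\|_\infty$).

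The structural core is a Steinitz-type balancing lemma. Take any feasible $x$ with $Ax=b$ and regard it as a multiset of columns summing to $b$. By the Steinitz Lemma, applied in the norm adapted to the column bound, we may order these columns so that every prefix sum stays within distance $O(M a_{\max})$ of the segment from $0$ to $b$; the sharper bound $O(\sqrt{M}\, a_{\max})$ should follow from a discrepancy or Banaszczyk-type refinement. Now pair consecutive columns and split each pair alternately, or more simply take a balanced colouring of the multiset. This gives $x=x^1+x^2$ with $Ax^1,Ax^2 \in b/2 + B$, where $B$ is a box of side $K=O(\sqrt{M}\,a_{\max})$. Iterating $k=\lceil \log \|b\|_\infty\rceil$ times, an optimal solution decomposes along a binary tree whose level-$\ell$ nodes have right-hand sides in $b/2^\ell + O(K)\cdot[-1,1]^M$.

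The algorithm then computes, bottom-up over $\ell=k,\ldots,0$, a table $T_\ell$ storing the optimal cost of $\min\{c^\top x: Ax=b'\}$ for every integer $b'$ in the box around $b/2^\ell$. The base level has $\|b'\|_\infty=O(K)$. There the table is filled directly, since solutions with small right-hand side can be assumed to have small support by the same Steinitz argument. Going up a level, $T_{\ell}[b'] = \min_{b''} \bigl(T_{\ell+1}[b''] + T_{\ell+1}[b'-b'']\bigr)$ is a $(\min,+)$-convolution restricted to boxes of $O(K)^M$ entries. Done naively, this costs $O(K)^{2M} = O(M^{M} a_{\max}^{2M})$ per level, and summing over $O(\log\|b\|_\infty)$ levels gives the claimed $\widetilde{O}(M^M a_{\max}^{2M})$.

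The main obstacle is the balancing lemma with the right constant. The box side must be $O(\sqrt{M}\,a_{\max})$ rather than $O(M a_{\max})$, or the bound degrades to $M^{2M}$. Also, the halving error must not accumulate across levels: each level's box has to be re-centred at $b/2^\ell$ with a uniform radius, not a radius that grows with $\ell$. I would first try to prove it by applying the Steinitz bound to the sequence of column differences from a balanced pairing. If the $\sqrt{M}$ factor resists, I would settle for $O(M a_{\max})$ and then tighten via a discrepancy bound on the colouring.
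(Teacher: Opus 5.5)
The paper gives no proof of this statement; it is quoted from Jansen and Rohwedder. Your outline has the shape of their argument: deduplicated columns for the $MN$ term, halving the right-hand side, and naive $(\min,+)$-convolutions over boxes of $O(\sqrt{M}a_{\max})^M$ points. The step that fails as written is the bottom of your recursion. After $\lceil\log\|b\|_\infty\rceil$ levels you are left with the original ILP for all right-hand sides with $\|b'\|_\infty=O(K)$, and ``filled directly'' is not a procedure. A small right-hand side does not give small solutions: the rows $x_{i+1}-a_{\max}x_i=0$ for $i<M$, together with $x_1=1$, force $x_M=a_{\max}^{M-1}$. The Steinitz/pigeonhole argument only bounds a minimal optimal solution by $\|x\|_1\le O(Ma_{\max})^M$. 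Small \emph{support} is beside the point, since it is the multiplicities that are large.

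The clean fix, which is how the Jansen--Rohwedder argument is set up, is to keep halving until every leaf is a single column or empty. The depth is then $\lceil\log\|x^*\|_1\rceil=O(M\log(Ma_{\max})+\log\|b\|_\infty)$ for an optimal $x^*$; this follows from Eisenbrand--Weismantel proximity to an LP vertex, after excluding unboundedness. Below level $\log\|b\|_\infty$ the boxes simply stay around $0$ with radius $O(K)$. The base table is $T[0]=0$, $T[v]=\min\{c_j : A_j=v\}$, read off your deduplicated columns. This requires balanced splits (sizes $\lceil t/2\rceil$ and $\lfloor t/2\rfloor$). The tables must then be read as ``cheapest solution decomposable along a tree of the remaining depth'', not as unrestricted optima. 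Alternatively, a single Dijkstra run over the Steinitz box, with edge costs made nonnegative by an LP-dual potential, should also fill your base table within budget; but that is an extra ingredient you would have to supply.

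The second gap is the balancing lemma itself, which you leave open. An ordering argument gives only $O(Ma_{\max})$, because an $O(\sqrt M)$ Steinitz constant is conjectural. Banaszczyk's theorem, measured in $\ell_\infty$, gives only $O(\sqrt{M\log M}\,a_{\max})$. What you need is a \emph{signing}, not an ordering. By Spencer's theorem, together with the linear-algebra reduction to at most $M$ fractional coordinates, any number of vectors in $[-1,1]^M$ admit signs with $\|\sum_i\epsilon_iv_i\|_\infty=O(\sqrt M)$. Apply this to the differences $v_{2i-1}-v_{2i}$ of an arbitrary pairing of the columns of $x$, putting a leftover column on either side. This gives at once an exactly balanced split and $\|Ax^1-b/2\|_\infty=\tfrac12\|Ax^1-Ax^2\|_\infty=O(\sqrt M a_{\max})$. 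The accumulation you worry about is harmless. If a parent's right-hand side is within $R_\ell$ of $b/2^\ell$, its children's are within $R_\ell/2+K$ of $b/2^{\ell+1}$, so $R_\ell<2K$ at every level.
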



\section{Makespan minimization on related machines}
\label{sec:Makespan}

In the following section we show that \threefield{Qm}{}{C_{\max}} on $n$ jobs with maximum processing time~$p_{\max}$ can be solved in $\widetilde{O}(p^{2(m+1)}_{\max}+n)$ time, providing a proof for our main result Theorem~\ref{thm:main}. Throughout the section, we assume that the jobs are ordered according to the Shortest Processing Time (SPT) rule, and so $p_1\leq p_{2}\leq \cdots \leq p_n$. Our algorithm determines the optimal makespan value~$C$ by performing a binary search within the range $[n,\,p_{\max }\cdot n]$, where $n$ is a lower bound and $np_{\max}$ is an upper bound on the minimal makespan value. The $O(n\log n)$ additive cost for sorting and the $O(\log p_{\max} + \log n)$ multiplicative factor for the search iterations represent a negligible overhead to the overall running time. Thus, we can proceed as if the value~$C$ is given, simplifying the problem to its decision version. 

Recall the ILP formulation for \threefield{Rm}{}{C_{\max}} described in Section~\ref{sec:introduction}. We will refer to this program as $\text{ILP}_0$ throughout. Note that $\text{ILP}_0$ is not short, since the number of rows in the constraint matrix depends also on $n$, due to the $n$ constraints of the form $\sum_i x_{i,j}=1$. We construct a short ILP which we refer to as $\text{ILP}_1$, such that the feasible solutions of $\text{ILP}_1$ with non-negative objective values correspond to feasible solutions of $\text{ILP}_0$. The program $\text{ILP}_1$ again has $mn$ integer variables $x_{1,1},\ldots,x_{m,n} \in \mathbb{N}$, one for each job-machine pair, but here these variables can take any integer value, and not only values in $\{0,1\}$. Nevertheless, our intended interpretation for these variables remains the same; that is, $x_{i,j}=1$ corresponds to the case where job~$j$ is assigned to $M_i$. We also add $m$ integer slack variables $s_1,\ldots,s_{m} \in \mathbb{N}$, one per machine, where variable~$s_i$ will correspond to the difference between the makespan value~$C$ and the total load on machine $M_i$. 

Next we turn to describing the constraints in $\text{ILP}_1$. First, we convert all inequality constraints in $\text{ILP}_0$ to equality constraints in standard fashion. That is, for each $i \in [m]$, we construct a constraint corresponding to machine $M_i$ by:
$$
\sum_{j\in [n]}  p_j  x_{i,j} \,+\, s_{i} \;=\; q_iC.
$$
Next, we replace all the $n$ equality constraints in $\text{ILP}_0$ by a single constraint by summing up all these constraints. That is, we add the following constraint to $\text{ILP}_1$, which ensures that the total sum of the variables in $\text{ILP}_1$ is exactly~$n$:
$$
\sum_{i\in [m]} \sum_{j\in [n]}   x_{i,j}  \;=\; n.
$$
Thus, in total, $\text{ILP}_1$ has $m+1$ constraints.

Note the constraint $\sum_i \sum _j$ still allows assigning jobs multiple times, \emph{i.e.} we could still have $\sum_i x_{i,j}  > 1$ and $\sum_i x_{i,k}  = 0$ for some $j \neq k \in [n]$. One way of fixing this is to add the following constraint 
$$
\sum_{j\in[n]}  n^{j}\sum_{i\in [m]}  x_{i,j} \,=\, \sum_{j\in [n]}  n^{j}.
$$
Indeed, the only way to satisfy the inequality above is with $\sum_i x_{i,j}  = 1$ for all $j \in [n]$. However, the maximal value of the left-hand side coefficient of this constraint is too large, yielding a running time of $\Omega(n^n)$ in the ILP solvers of Theorem~\ref{thm:ILP-bounded} and Theorem~\ref{thm:ILP-unbounded}. Instead, we utilize the fact that the coefficients in the objective function do not affect the running time of these solvers, and replace the above constraint with a corresponding objective function. The program $\text{ILP}_1$ is then given by:
\begin{alignat}{3}
\tag{$\text{ILP}_1$}
& \text{max} &\quad & \sum_{j\in[n]} n^{j} \sum_{i\in[m]}  x_{i,j} \;-\; \sum_{j\in[n]}  n^{j} \notag \\
&\text{s.t.} &\quad & \sum_{i \in [m]} \sum_{j \in [n]} x_{i,j}  \;=\; n & \qquad \qquad &  \notag  \\
& &\quad & \sum_{j \in [n]} p_jx_{i,j} +s_i \;=\; q_iC & \qquad \qquad & \forall \, i \in [m] \notag  \\
& \qquad \qquad & & x_{i,j}, s_i \in  \mathbb{N} & \qquad \qquad & \forall \, i \in [m],\, j \in[n] \notag
\end{alignat}
We use $Z=Z(x)$ to denote the value of the objective function on some given solution~$x$ for~$\text{ILP}_1$. 

\begin{lemma}
\label{lem:ILP1}
If there exists a feasible solution with $Z \ge 0$ for~$\text{ILP}_1$ then there exists a feasible solution for~$\text{ILP}_0$.
\end{lemma}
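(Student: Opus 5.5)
The plan is to read the objective of $\text{ILP}_1$ as a comparison in base~$n$, and then repair a feasible solution by exchanges that exploit the SPT order $p_1\le\cdots\le p_n$. Fix a feasible solution $(x,s)$ of $\text{ILP}_1$ with $Z\ge 0$, and write $y_j=\sum_{i\in[m]}x_{i,j}\in\mathbb{N}$ for the number of copies of job~$j$ that are used. Then $\sum_j y_j=n$ and $Z=\sum_{j\in[n]} n^j(y_j-1)$. If $y_j=1$ for all~$j$ we are done: every $x_{i,j}$ is $0$ or $1$, and the $m$ machine constraints with $s_i\ge 0$ give a feasible solution of $\text{ILP}_0$. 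In general, however, $Z\ge 0$ does not force $y\equiv 1$. For example, $y_n=2$, $y_1=0$ and $y_j=1$ otherwise gives $Z=n^n-n>0$. So the argument must show that over-used copies can be traded for missing jobs.

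\textbf{Step 1 (base-$n$ comparison).} Let $k$ be the largest index with $y_k\neq 1$. I claim that $y_k\ge 2$. Suppose instead $y_k=0$. Since $\sum_{j<k}y_j\le n$, we get $\sum_{j<k}n^j y_j\le n^{k-1}\cdot n=n^k$. For $k\ge 2$ we also have $\sum_{j<k}n^j\ge n>0$. Together these give $Z=-n^k+\sum_{j<k}n^j(y_j-1)<0$, a contradiction. The case $k=1$ is impossible, because $y_1=0$ with all other $y_j=1$ would give $\sum_j y_j=n-1$. More generally, $Z\ge 0$ says that the vector $(y_n-1,\dots,y_1-1)$ is lexicographically nonnegative.

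\textbf{Step 2 (exchange).} By Step~1 and $\sum_j y_j=n$, some index $j<k$ has $y_j=0$. Pick a machine $i$ with $x_{i,k}\ge 1$, and replace one copy of job~$k$ on $M_i$ by job~$j$: decrease $x_{i,k}$, increase $x_{i,j}$, and increase $s_i$ by $p_k-p_j\ge 0$. This keeps all $m+1$ constraints satisfied, and it does not increase any machine load (after dividing by $q_i$). It also strictly decreases the potential $\sum_j|y_j-1|$. Repeating such exchanges until $y\equiv 1$ yields the desired solution of $\text{ILP}_0$.

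\textbf{Main obstacle.} The hard part is keeping the exchanges valid throughout the iteration. Each exchange requires the over-used job to be no shorter than the missing job it replaces. The objective $Z$ itself is not preserved by an exchange, since it changes by $n^j-n^k<0$. I would therefore keep the multiset $\{y_j\}$ as the invariant rather than~$Z$. The cleanest sufficient condition is the Hall-type suffix condition: $\sum_{j\ge t}y_j\ge n-t+1$ for every~$t$. Under this condition one can find an injection from the real jobs into the used copies that maps each job~$j$ to a copy of some job $j'\ge j$, and all replacements can then be made simultaneously. I would first try to derive this suffix condition from the lexicographic statement of Step~1, pairing surplus copies with deficits from the top down.

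I expect this derivation to be the delicate point, and it may fail. Lexicographic nonnegativity does not obviously imply the suffix inequalities. For instance, $y_n=2$, $y_{n-1}=y_{n-2}=0$ and $y_1=2$ has $Z>0$, yet the suffix condition fails at $t=2$. If the reduction breaks here, I would look for a stronger use of $Z\ge 0$, or strengthen the objective so that it enforces the suffix condition.
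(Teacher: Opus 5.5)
The gap you flag as your main obstacle is real, and it cannot be closed: the lemma as stated is false. Your Steps~1 and~2 match the paper's proof. It also takes the largest missing index $j_1$ and the largest over-used index $j_2$, uses the base-$n$ comparison to rule out $j_1>j_2$, and swaps one copy of $j_2$ for $j_1$ on the same machine.

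Your own example becomes a counterexample once the missing jobs are long. Take $m=2$, $q_1=q_2=1$, $n=4$, $p=(1,10,10,10)$ and $C=11$. Set $x_{1,1}=x_{1,4}=x_{2,1}=x_{2,4}=1$, all other $x$-variables to $0$, and $s_1=s_2=0$.
\begin{itemize}
\item Every constraint of $\text{ILP}_1$ holds, and $y=(2,0,0,2)$ gives $Z=2\cdot 4+2\cdot 4^4-(4+4^2+4^3+4^4)=180>0$.
\item Yet $\text{ILP}_0$ is infeasible, because two of the three jobs of length $10$ must share a machine.
\item With $m=1$, $C=22$ and $x_{1,1}=x_{1,4}=2$, the same happens.
\end{itemize}

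So no ``stronger use of $Z\ge 0$'' exists. Since $y_j-1\in\{-1,0,\dots,n-1\}$, your Step~1 reasoning shows that $Z\ge 0$ holds exactly when $(y_n-1,\dots,y_1-1)$ is lexicographically nonnegative. Your example shows that this does not imply the suffix condition $\sum_{j\ge t}y_j\ge n-t+1$, which is what your Hall-type injection needs.

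The paper's proof makes exactly the error you anticipated. Each exchange lowers $Z$ by $n^{j_2}-n^{j_1}$, so after the first exchange the hypothesis $Z\ge 0$, which was used to force $j_1<j_2$, may fail. Hence ``continuing in this way repeatedly'' is unjustified.
\begin{itemize}
\item In the instance above, one exchange yields $y=(2,0,1,1)$ with $Z=4-16=-12$.
\item Now the largest missing job (job~$2$, length $10$) is longer than the only over-used job (job~$1$, length $1$), and the machine holding job~$1$ has no slack. No further exchange is possible.
\item Consequently the test ``$\max Z\ge 0$'' accepts $C=11$, although the optimal makespan is $20$. The problem therefore affects the algorithm behind Theorem~\ref{thm:main}, not only the write-up.
\end{itemize}

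Your fallback of strengthening the objective or adding the suffix inequalities would prove a statement about a different program. Imposing those inequalities directly adds up to $n$ rows, or up to $p_{\max}$ rows after grouping equal lengths. The row count then no longer depends on $m$ alone, so Theorem~\ref{thm:ILP-unbounded} no longer gives the claimed bound.
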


\begin{proof}
Suppose there exists a feasible solution $x$ for~$\text{ILP}_1$ with $Z(x) \ge 0$ which is not feasible for~$\text{ILP}_0$. This means that $\sum_{i\in [m]} x_{i,j} \neq 1$ for some $j\in [n]$. Then, as $x$ is feasible for~$\text{ILP}_1$, we have $\sum_{i \in [m]} \sum_{j \in [n]} x_{i,j}=n$, which implies that there are two indices $j_1,j_2 \in [n]$ with $\sum_{i\in [m]} x_{i,j_1} =0$ and $\sum_{i \in [m]} x_{i,j_2} > 1$. Choose $j_1$ and $j_2$ to be the largest such indices. Then $\sum_{i \in [m]} x_{i,j} =1$ for all $j > j_1,j_2$. 

First, suppose that $j_1 > j_2$. Then, again since $\sum_i \sum_jx_{i,j}=n$, we have
$$
\sum^n_{j=1} n^j \;>\; \sum^n_{j=j_1+1} n^j + j_1 \cdot n^{j_1-1} \;>\; \sum^n_{j=1}n^j\sum^m_{i=1}  x_{i,j},
$$
and so $Z < 0$, a contradiction. Thus, it must be that $j_1<j_2$. Consequently, we have $p_{j_1} \leq p_{j_2}$ since the jobs are ordered according to the SPT rule. It follows that we can decrease the value of one the variables $x_{i,j_2}$ with $x_{i,j_2} \geq 1$ by one, and increase the value of $x_{i,j_1}$ (with the same index~$i$) by one. Since $p_{j_1} \leq p_{j_2}$, we still have $\sum_j p_jx_{i,j} \leq q_i C$, and so we can set $s_i$ accordingly and obtain an alternative feasible solution for~$\text{ILP}_1$. Continuing in this way repeatedly, we obtain a feasible solution $x'$ for~$\text{ILP}_1$ with $\sum_i x'_{i,j} =1$ for all $j\in [n]$. Such a solution is also feasible for~$\text{ILP}_0$.  
\end{proof}

According to Lemma~\ref{lem:ILP1}, to determine whether~$\text{ILP}_1$ has a feasible solution (which directly corresponds to a schedule of makespan of at most $C$), it suffices to determine whether~$\text{ILP}_0$ has a feasible solution with non-negative objective value. Let $A$ denote the constraint matrix of~$\text{ILP}_1$. Then $A$ has $m+1$ rows, and the maximum absolute value $a_{\max}$ of any coefficient in~$A$ is~$p_{\max}$. Thus, using the short ILP-solver of Theorem~\ref{thm:ILP-unbounded}, we obtain an algorithm with total running time of $\widetilde{O}(p^{2(m+1)}_{\max}+n)$, as promised by Theorem~\ref{thm:main}.

\section{Two natural variants}
\label{sec:TwoVariants}%

We next show how to extend $\text{ILP}_1$, the program formulated in Section~\ref{sec:Makespan}, so that it captures the \threefield{Qm}{cap}{C_{\max}} and \threefield{Qm}{rej}{C_{\max}} problems as well. Recall that in the former variant, each machine $M_i$ is associated with a bound $b_i$, and we can only schedule at most $b_i$ jobs on this machine. In the latter, each job $j \in J$ has a weight $w_j$, and we can reject (\emph{i.e.} not include in our schedule) jobs with total weight at most some given $W \in \mathbb{N}$. 

To solve \threefield{Qm}{cap}{C_{\max}}, we add to $\text{ILP}_1$ $m$ additional slack variables $\widehat{s}_1,\ldots,\widehat{s}_m$, and~$m$ additional constraints, that together ensure that all capacity constraints are maintained. The constraint corresponding machine~$i$ is written as 
$$
\sum_{j\in [n]}  x_{i,j} \,+\, \widehat{s}_{i} \;=\; b_i.
$$
We call the corresponding modified program $\text{ILP}_2$. It is not difficult to see, again using Lemma~\ref{lem:ILP1}, that feasible solutions of $\text{ILP}_2$ correspond to schedules $\sigma$ of makespan at most $C$ where each machine capacity is maintained, and vice-versa. Since the constraint matrix~$A$ of $\text{ILP}_2$ has $2m+1$ rows, and $a_{\max}=p_{\max}$, by Theorem~\ref{thm:ILP-unbounded} we get: 

\begin{theorem}
\threefield{Qm}{cap}{C_{\max}} can be solved in $\widetilde{O}(p^{4m+2}_{\max} + n)$ time. 
\end{theorem}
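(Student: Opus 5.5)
The plan is to follow the template of Theorem~\ref{thm:main}. We fix a target makespan $C$ by binary search over $[n, p_{\max} n]$, sort the jobs by SPT, and solve the decision version via $\text{ILP}_2$. This program is $\text{ILP}_1$ together with the $m$ capacity constraints $\sum_{j\in[n]} x_{i,j} + \widehat{s}_i = b_i$, where $\widehat{s}_i \in \mathbb{N}$. We may assume $b_i \le n$ for all $i$, since larger capacities can be truncated to $n$ without changing the problem. The objective function is the same as in $\text{ILP}_1$, and we ask whether the optimum value $Z$ is non-negative.

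\emph{Soundness and completeness.} Any schedule $\sigma$ of makespan at most $C$ that respects the capacities gives a feasible solution of $\text{ILP}_2$ with $Z=0$. We set $x_{i,j}=1$ iff $\sigma(j)=i$, and define the slacks $s_i$ and $\widehat{s}_i$ as the respective differences, which are non-negative by assumption. For the converse, we rerun the exchange argument of Lemma~\ref{lem:ILP1} on a feasible solution with $Z \ge 0$. Since $\sum_i\sum_j x_{i,j}=n$, any violation yields indices $j_1<j_2$ with $\sum_i x_{i,j_1}=0$ and $\sum_i x_{i,j_2}>1$; the case $j_1>j_2$ is excluded because it forces $Z<0$. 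The exchange then decreases some $x_{i,j_2}\ge 1$ by one and increases $x_{i,j_1}$ by one on the same machine $i$. The key check is that this exchange does not change the row sum $\sum_j x_{i,j}$, so each capacity constraint and its slack $\widehat{s}_i$ remain valid. The load constraint stays feasible as before because $p_{j_1}\le p_{j_2}$. Each exchange strictly increases the objective, so the process terminates, and it ends with $\sum_i x_{i,j}=1$ for all $j$. This is a schedule respecting both the makespan bound and the capacities.

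\emph{Running time.} The constraint matrix now has $M=2m+1$ rows. The entries of the new rows lie in $\{0,1\}$, so $a_{\max}=p_{\max}$ is unchanged. The right-hand sides $q_iC$ and $b_i$, and the huge objective coefficients $n^j$, do not enter the bound of Theorem~\ref{thm:ILP-unbounded}. That theorem therefore gives time $\widetilde{O}((2m+1)^{2m+1} p_{\max}^{2(2m+1)} + (2m+1)N)$ with $N = mn + 2m$. Since $m=O(1)$, this is $\widetilde{O}(p_{\max}^{4m+2}+n)$. The binary search and the sorting add only polylogarithmic and $O(n\log n)$ overhead respectively.

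The only step requiring real care is confirming that the exchange argument is compatible with the capacity rows. This holds because the exchange stays on a single machine $i$ and so preserves that machine's job count. A secondary point is the arithmetic on the objective coefficients $n^j$, which have $O(n\log n)$ bits. As in Section~\ref{sec:Makespan}, we rely on the solver's cost being independent of the objective's magnitude, and we would note this rather than re-prove it.
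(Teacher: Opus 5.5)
Your handling of the capacity rows is correct, but the exchange argument you inherit from Lemma~\ref{lem:ILP1} has a genuine gap. It is the same gap as in the paper's own proof of that lemma (``continuing in this way repeatedly''). Your claim that each exchange strictly increases the objective is backwards. Replacing a copy of $j_2$ by $j_1$ with $j_1<j_2$ changes $Z$ by $n^{j_1}-n^{j_2}<0$.

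Termination is not the issue, since $\sum_j|\sum_i x_{i,j}-1|$ drops by $2$ per step. The issue is that $Z\ge 0$ is the only thing excluding the bad case, where the largest $j$ with $\sum_i x_{i,j}\neq 1$ is an unassigned job. That hypothesis is not preserved. Once you reach the bad case, every over-assigned job is shorter than the missing one, and no load-safe swap exists.

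In fact the reduction is unsound. Take $m=2$, $q_1=q_2=1$, $b_1=b_2=2$, $p=(1,10,10,10)$, $C=11$, and put one copy of job~1 and one copy of job~4 on each machine. Every row of $\text{ILP}_2$ holds: loads are $11$, each machine has two jobs, and there are four copies in total. Moreover $Z=4^4+4-4^3-4^2=180>0$. Yet every schedule puts two jobs of length $10$ on one machine, so the optimum is $20$, and the decision procedure wrongly answers ``yes'' at $C=11$. Dropping the capacities, the same instance refutes Lemma~\ref{lem:ILP1} itself.

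Your first exchange gives counts $(2,0,1,1)$ with $Z=-12$. Fixing job~2 then means swapping out a copy of job~1, which raises a load to $20$. What the exchange really needs is the Hall-type domination condition $\sum_{j'\ge t}\sum_i x_{i,j'}\ge n-t+1$ for all $t$. $Z\ge 0$ does not imply it, because $Z$ is governed by the top-most discrepancy (here the surplus at job~4 outweighs the deficits at jobs 2 and 3).

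Encoding that condition with $O(m)$ rows and small coefficients is the missing idea. Neither your proposal nor the paper supplies it. So your running-time analysis of $\text{ILP}_2$ is fine, but it does not establish the theorem.
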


Next consider the \threefield{Qm}{rej}{C_{\max}}. For this variant, we assume that we have an additional machine $M_0$ where all rejected jobs are scheduled. Thus, our goal is to find a schedule where $M_i(\sigma) \leq C$ for all $i \in [m]$, and $M_0(\sigma) \leq W$. We therefore add the following constraint, with an additional slack variable~$s_0$, to $\text{ILP}_1$ of the previous section:
$$
\sum_{j\in[n]} w_jx_{0,j} \,+\, s_0 \;=\; W.
$$
Moreover, we modify the objective function and constraints so that index~$i$ now ranges from~$0$ to $m$. Unfortunately, we cannot directly use the \textsc{ILP-unbounded} solver of Theorem~\ref{thm:ILP-unbounded} on our new ILP. This is because there might be feasible solutions with $x_{0,j} > 1$ for some $j \in [n]$ that do not correspond to feasible schedules. Indeed, for a pair of jobs $j_1$ and $j_2$ with $p_{j_1} < p_{j_2}$ and $w_{j_1} > w_{j_2}$, it might be beneficial to reject $j_2$ twice and not schedule $j_1$ at all. For this reason, we introduce upper-bound constraints on the non-slack variables of our ILP, and require that $x_{i,j} \leq 1$ for all $i \in \{0,\ldots,m\}$ and $j\in[n]$. We call this program $\text{ILP}_3$.   

\begin{lemma}
If there exists a feasible solution with $Z \ge 0$ for $\text{ILP}_3$, then there exists a feasible solution $x$ for $\text{ILP}_0$ with $\sum_j w_j x_{0,j} \leq W$.
\end{lemma}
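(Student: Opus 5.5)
The plan is to reuse the exchange argument of Lemma~\ref{lem:ILP1}, now with the extra machine $M_0$ (index $i=0$) and the upper bounds $x_{i,j}\le 1$ of $\text{ILP}_3$. There is one new difficulty. Moving a unit of job $j_2$ to job $j_1$ is safe on a real machine $M_i$ with $i\ge 1$, because $p_{j_1}\le p_{j_2}$. It is not safe on $M_0$, since we may have $w_{j_1}>w_{j_2}$. The bounds $x_{i,j}\le 1$ are exactly what let us avoid exchanging on $M_0$.

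\textbf{Step 1 (lexicographic reformulation).} For a feasible $x$, write $c_j=\sum_{i=0}^m x_{i,j}$. Then $0\le c_j\le m+1$ and $\sum_j c_j=n$. I will show, as in the first part of the proof of Lemma~\ref{lem:ILP1}, that $Z(x)\ge 0$ forces the following: if $j^*$ is the largest index with $c_{j^*}\neq 1$, then $c_{j^*}>1$. Suppose instead $c_{j^*}=0$. Since $\sum_j c_j = n$ and $c_j=1$ for all $j>j^*$, we get $\sum_{j<j^*} c_j = j^*$. Hence $\sum_j n^j c_j \le \sum_{j>j^*} n^j + j^*\cdot n^{j^*-1} < \sum_j n^j$, so $Z<0$, a contradiction. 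I will call this the \emph{top-surplus property}: the largest index $j^*$ with $c_{j^*}\ne 1$ has $c_{j^*}\ge 2$.

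\textbf{Step 2 (exchange preserving the invariant).} Let $x$ be feasible for $\text{ILP}_3$ with the top-surplus property, and suppose $x$ is not yet a schedule. Let $j_2$ be the largest index with $c_{j_2}\ne 1$, so $c_{j_2}\ge 2$. Since $\sum_j c_j=n$ and $c_j=1$ for $j>j_2$, there is some $j_1<j_2$ with $c_{j_1}=0$. Because every $x_{i,j_2}\le 1$, job $j_2$ is placed on at least two distinct machines, and at most one of these is $M_0$. So there is some $i\ge 1$ with $x_{i,j_2}=1$, while $x_{i,j_1}=0$. Set $x_{i,j_2}=0$ and $x_{i,j_1}=1$, and adjust $s_i$. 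The jobs are in SPT order, so $p_{j_1}\le p_{j_2}$ and the load constraint of $M_i$ still holds. The total-count constraint is unchanged, the upper bounds are respected, and the $M_0$ row (hence $\sum_j w_jx_{0,j}\le W$) is untouched.

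\textbf{Step 3 (termination).} After the move, all $c_j$ with $j>j_2$ are unchanged and $c_{j_2}\ge 1$. So either the new top mismatch is still $j_2$ with surplus, or it lies below $j_2$. In the latter case I must re-establish the top-surplus property; this is the main obstacle, since $Z$ itself may become negative and Step 1 cannot simply be reapplied. I would handle it by proving that the top-surplus property is equivalent to ``$(c_n,\dots,c_1)$ is lexicographically at least $(1,\dots,1)$'' together with $\sum_j c_j=n$. The move lowers $c_{j_2}$ only while keeping it $\ge 1$ and raises a lower coordinate, so lexicographic dominance is preserved. The potential $\sum_j|c_j-1|$ drops by $2$ each step, so the process terminates with $c_j=1$ for all $j$. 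The resulting $x$ is a feasible solution of $\text{ILP}_0$ (with machine $M_0$) satisfying $\sum_j w_jx_{0,j}\le W$.
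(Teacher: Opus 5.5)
\textbf{Step 3 is where the proof fails, and the failure is fatal: the statement itself is false.}

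Your Steps 1 and 2 are correct. The problem is the Step 3 claim that your move preserves lexicographic dominance of $(c_n,\dots,c_1)$ over $(1,\dots,1)$. When $c_{j_2}$ drops to exactly $1$, the comparison passes to the coordinates below $j_2$. Those may contain several zeros, and raising one of them does not restore dominance.

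Take $n=4$ and $(c_1,c_2,c_3,c_4)=(2,0,0,2)$. Then $\sum_j c_j=4$ and $\sum_j 4^j c_j=520>340=\sum_j 4^j$, so $Z>0$ and the top-surplus property holds. Moving a copy of job $4$ to job $3$ gives $(c_4,c_3,c_2,c_1)=(1,1,0,2)$; moving it to job $2$ gives $(1,0,1,2)$. Both are lexicographically below $(1,1,1,1)$. In both, the only surplus is now at index $1$, below every deficit, and no exchange there is guaranteed to keep the loads feasible. An exchange that only moves units to smaller indices needs the Hall-type suffix condition $\sum_{j\ge k}c_j\ge n-k+1$ for every $k$, and $Z\ge 0$ does not imply it.

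A smarter choice of $j_1$ cannot fix this. Take a single machine $M_1$ with $q_1=1$, processing times $p=(1,5,5,5)$ (SPT order), weights $w=(1,100,100,1)$, $W=2$ and $C=6$. Set $x_{0,1}=x_{1,1}=x_{0,4}=x_{1,4}=1$, all other $x_{i,j}=0$, and $s_0=s_1=0$. This is feasible for $\text{ILP}_3$:
\begin{itemize}
\item the total count is $4$;
\item the load on $M_1$ is $p_1+p_4=6$;
\item the rejected weight is $w_1+w_4=2$;
\item every $x_{i,j}\le 1$.
\end{itemize}
Its objective value is $Z=180\ge 0$. Yet any admissible rejected set has weight at most $2$, so it lies in $\{1,4\}$. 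That forces jobs $2$ and $3$ onto $M_1$ with load $10>6$, so no feasible schedule exists.

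The paper's proof has the same hole, hidden in ``repeatedly continuing in this fashion''. Each exchange lowers $Z$ by $n^{j_2}-n^{j_1}>0$. So the hypothesis $Z\ge 0$, which is what guarantees $j_1\le j_2$, is no longer available after the first exchange. The same vector on two identical machines with $p=(1,5,5,5)$ and $C=6$ likewise refutes Lemma~\ref{lem:ILP1}, since the total processing time $16$ exceeds $2C$. You found exactly the right obstacle, but it cannot be argued around.
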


\begin{proof}
Suppose there exists a feasible solution $x$ for $\text{ILP}_3$ with $Z(x) \ge 0$ that is not feasible for $\text{ILP}_0$. Then as $x$ is feasible for $\text{ILP}_3$, we have $\sum_j w_j x_{0,j} \leq W$, and as it is infeasible for $\text{ILP}_0$, we have $\sum_i x_{i,j} \neq 1$ for some $j\in [n]$. Then as $\sum_i \sum_j x_{i,j}=n$ by the first constraint in $\text{ILP}_3$, there must be two indices $j_1,j_2 \in [n]$ with $\sum_i x_{i,j_1} =0$ and $\sum_i x_{i,j_2} > 1$. As is shown in the proof of Lemma~\ref{lem:ILP1}, it must be that $j_1 \leq j_2$ or otherwise $Z(x) <0$. Thus, $p_{j_1} \leq p_{j_2}$ since $J$ is ordered according to the SPT rule.

Now, using the fact that $x_{i,j_2}\leq 1$ for all $i \in \{0,\ldots,m\}$ in $\text{ILP}_3$ due to the upper-bound constraints, it follows that there are two distinct indices $i_1 \neq i_2 \in \{0,\ldots,m\}$ such that $x_{i_1,j_2}=1$ and $x_{i_2,j_2}= 1$. At least one of these indices is not equal to zero, so assume that $i_1 \neq 0$. Since $p_{j_1} \leq p_{j_2}$, we can decrease the value of  $x_{i_1,j_2}$ to zero, and increase the value of $x_{i_1,j_1}$ to one. By adjusting $s_{i_1}$ accordingly, the constraint corresponding to $i_1$ is satisfied, and moreover, $\sum_j w_j x_{0,j}$ remains the same. Thus, we obtain an alternative feasible solution for~$\text{ILP}_3$, and so repeatedly continuing in this fashion we construct a feasible solution $x'$ with $\sum_i x'_{i,j} =1$ for all $j\in [n]$. Such a solution is feasible for $\text{ILP}_0$ and has $\sum_j w_j x'_{0,j} \leq W$.
\end{proof}

Note that the constraint matrix $A$ of $\text{ILP}_3$ has $m+2$ rows, and that 
$$
a_{\max}=\max\{p_{\max},\,w_{\max}\}=O(p_{\max}+w_{\max}).
$$
Thus, by Theorem~\ref{thm:ILP-bounded} we get: 
\begin{theorem}
\threefield{Qm}{rej}{C_{\max}} can be solved in $\widetilde{O}((p_{\max} + w_{\max})^{(m+2)(m+3)} \cdot n)$ time.
\end{theorem}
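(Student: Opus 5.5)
The plan is to reduce \threefield{Qm}{rej}{C_{\max}} to $O(\log(p_{\max} n))$ calls of the \textsc{ILP-bounded} solver of Theorem~\ref{thm:ILP-bounded} on $\text{ILP}_3$, exactly as $\text{ILP}_1$ was used for Theorem~\ref{thm:main}. First, sort the jobs by SPT in $O(n\log n)$ time. Then binary search over the target makespan $C$ in the range $[0, p_{\max}\cdot n]$. This range suffices because rejecting nothing and scheduling all jobs on one machine gives an upper bound. The search contributes only a logarithmic factor, which $\widetilde{O}$ absorbs. For each fixed $C$, build $\text{ILP}_3$ with variables $x_{i,j}$ for $i\in\{0,\ldots,m\}$, $j\in[n]$, upper bounds $x_{i,j}\le 1$, unbounded slacks $s_0,\ldots,s_m$, and the objective $\sum_j n^j\sum_i x_{i,j}-\sum_j n^j$. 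Solve it as a maximization, equivalently minimizing the negated objective, and test whether the optimum satisfies $Z\ge 0$.

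Correctness has two directions. \emph{Soundness:} if the optimum has $Z\ge 0$, the preceding lemma yields an $\text{ILP}_0$-feasible $x$ with $\sum_j w_jx_{0,j}\le W$. Then $\sigma(j)=i$ for $x_{i,j}=1$ with $i\ge 1$, and $j$ rejected when $x_{0,j}=1$, gives a schedule of makespan at most $C$ rejecting weight at most $W$. Here the machine constraints read $\sum_j p_j x_{i,j}+s_i=q_iC$ in the paper's speed-scaled form. \emph{Completeness:} any valid schedule gives a 0/1 assignment with each job assigned exactly once. This assignment satisfies all constraints and bounds, and its objective value is exactly $Z=0$. Hence the optimum is nonnegative whenever a solution exists. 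The minimal feasible $C$ found by the binary search is therefore the optimal makespan.

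For the running time, $\text{ILP}_3$ has $M=m+2$ constraint rows, namely the cardinality row, $m$ machine rows, and the rejection row. It has $N=(m+1)n+m+1=O(n)$ variables, and $a_{\max}=\max\{p_{\max},w_{\max},1\}$. Slack variables can be given the trivial upper bound $\max\{q_iC,W\}$ so that the instance fits the \textsc{ILP-bounded} format. Theorem~\ref{thm:ILP-bounded} then gives time $\widetilde{O}\bigl((m+2)^{(m+3)^2}\cdot a_{\max}^{(m+2)(m+3)}\cdot n\bigr)$. Since $m=O(1)$, this is $\widetilde{O}((p_{\max}+w_{\max})^{(m+2)(m+3)}\cdot n)$.

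The main obstacle is making sure the solver's dependence is genuinely only on $a_{\max}$ and not on the objective coefficients $n^j$ or on the right-hand sides $q_iC$ and $W$. The huge objective weights have $O(n\log n)$-bit encoding, so I would argue, as the paper does for $\text{ILP}_1$, that the Eisenbrand--Weismantel bound counts arithmetic on the objective as polylogarithmic overhead. If that accounting is contested, the first fallback I would try is to replace exact arithmetic on the objective with lexicographic comparison of the vectors $(\sum_i x_{i,n},\ldots,\sum_i x_{i,1})$. Each comparison then costs $O(n)$, but this is still only polylogarithmic per dynamic-programming step in amortized terms, so the bound is preserved up to $\widetilde{O}$ factors.
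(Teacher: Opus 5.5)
Your reduction is the paper's own, step for step: the same $\text{ILP}_3$, the same $m+2$ rows, the same $a_{\max}=\max\{p_{\max},w_{\max}\}$, and the same appeal to Theorem~\ref{thm:ILP-bounded}. Your completeness direction and your row/variable count are fine. The gap is in soundness. You delegate it, as the paper does, to the lemma preceding the theorem, and that lemma is false.

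Take two identical machines ($q_1=q_2=1$), four jobs with $p=(1,10,10,10)$, unit weights, $W=0$ and $C=11$. No job can be rejected, and the total load $31$ exceeds $2\cdot 11$, so no schedule of makespan $11$ exists; the optimum is $20$. Yet setting $x_{1,1}=x_{2,1}=x_{1,4}=x_{2,4}=1$, all other $x_{i,j}=0$, and $s_0=s_1=s_2=0$ satisfies every row and bound of $\text{ILP}_3$: the cardinality is $4$, the loads are $11$ and $11$, and the rejected weight is $0$. Its objective is $Z=2\cdot 4+2\cdot 4^4-(4+4^2+4^3+4^4)=180\ge 0$. So your binary search accepts $C=11$.

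The lemma's proof breaks at the word ``repeatedly''. Each exchange changes $Z$ by $n^{j_1}-n^{j_2}<0$. Hence after the first exchange the hypothesis $Z\ge 0$ can be lost, and that hypothesis was the only reason the surplus index lies above the hole. Write $y_j=\sum_i x_{i,j}$. In the example, moving the extra copy of job~$4$ to job~$3$ leaves $y=(2,0,1,1)$ with $Z=-12$. Now the remaining surplus (job~$1$) lies below the remaining hole (job~$2$), so no load-nonincreasing exchange is available.

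The underlying issue is that $Z\ge 0$ encodes only a lexicographic condition: the largest $j$ with $y_j\neq 1$ has $y_j>1$. Replacing copies by genuine jobs without increasing any load needs a condition on every suffix, e.g.\ $\sum_{j\ge t}y_j\ge n-t+1$ for all $t$. The vector $(2,0,0,2)$ violates this at $t=2$. The same example also defeats Lemma~\ref{lem:ILP1}. So this is not a presentational slip: one cardinality row plus a threshold on this objective does not certify that each job is used exactly once, and the statement needs a different argument. For $m=1$ the bound can be rescued by a two-row $0/1$ program in the style of $\text{ILP}_6$, with one variable per job indicating scheduled versus rejected. No analogous encoding is apparent for $m\ge 2$.

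Separately, your fallback for arithmetic on the huge objective coefficients does not work as written. Comparisons costing $O(n)$ are not polylogarithmic, and you give no amortization argument. Without unit-cost arithmetic on $O(n\log n)$-bit values, this could cost an extra factor of up to $n$.
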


\section{Release dates}
\label{sec:ReleaseDates}%

Consider now the \threefield{Qm}{r_j\geq 0}{C_{\max}} problem, the extension of \threefield{Qm}{}{C_{\max}} to the case where jobs have arbitrary release dates. Let $r_j$ denote the release date of job $j$, meaning that job~$j$ cannot be processed before time $r_j$ in any feasible schedule. It is well known that there exists an optimal schedule in which jobs are processed in a non-decreasing order of release dates on each machine (as shown by~\cite{Jackson1956} for the single machine counterpart), also known as an earliest release date (ERD) order. Moreover, by processing jobs as late as possible without increasing the makespan, we can assume that all jobs are processed consecutively without gaps starting from some time $t_i \geq 0$ on any machine $M_i$. Thus, given a bound $C$ on the makespan of the schedule, we search for a schedule $\sigma$ such that for each $i \in [m]$, all jobs on machine $i$ start after their release date when scheduled according to the ERD order, where the first job is processed starting at time $C-M_i(\sigma)$.

Let $r_{\#}=\{r_1,\ldots,r_n\}$ be the number of distinct release dates in~$J$, and let $r^{(1)},\ldots,r^{(r_{\#})}$ denote these release dates. We say that two jobs $j_1,j_2\in J$ are of the same \emph{type} if they have the same release date, \emph{i.e.} if~$r_{j_1}=r_{j_2}$. We partition the input job set into types. That is, for $k \in [r_{\#}]$, let $J^{(k)}=\{j \in J : r_j = r^{(k)}\}$ denote the jobs in $J$ with release date $r^{(k)}$, and let $n_k=|J^{(k)}|$ denote the number of these jobs. Here as well, we include one variable $x_{i,j}\in \mathbb{N}$, for any $i\in[m]$ and $j\in[n]$ with the intended interpretation that $x_{i,j}=1$ if job~$j$ is assigned to $M_i$ and $x_{i,j}=0$ otherwise. Here, as well, our ILP has a variable $x_{i,j}$ for each job-machine pair, and a variable $s_{i,k}$ for each type-machine pair. 

Next, we describe the constraints in our formulation. For each $i \in [m]$ and $k \in [r_\#]$, we include the following constraint that ensures the all jobs in $J^{(k)}$ scheduled on machine $M_i$ are not processed before their release time~$r_k$:
$$
\sum^{r_{\#}}_{k'=k} \sum_{j \in J^{(k')}}  p_j  x_{i,j} \,+\, s_{i,k} \;=\; q_iC-r^{(k)}.
$$
Next we split the equality $\sum_i \sum _j x_{i,j} = n$ used in the previous formulations into $r_{\#}$ equalities. For each $k \in [r_{\#}]$, we add a constraint corresponding to $J^{(k)}$ which is written as:
$$
\sum_{i \in [m]} \sum_{j \in J^{(k)}} x_{i,j}  \;=\; n_k.
$$
Finally, we add the same objective functions used in the previous formulations.


In total, we obtain an ILP with $r_{\#}(m+1)$ constraints (not counting the integrality constraints), which we refer to as $\text{ILP}_4$:
\begin{alignat}{3}
\tag{$\text{ILP}_4$}%
& \text{max} &\quad & \sum_{j\in[n]} n^{j} \sum_{i\in[m]}  x_{i,j} \;-\; \sum_{j\in[n]}  n^{j} \notag & & \\
&\text{s.t.} &\quad & \sum_{i \in [m]} \sum_{j \in J^{(k)}} x_{i,j}  \;=\; n_k  & \qquad \qquad & \forall  k \in [r_{\#}]     \notag \\
& &\quad & \sum^{r_{\#}}_{k'=k} \sum_{j \in \mathcal{S}^{(k')}}  p_j  x_{i,j} \,+\, s_{i,k} \;=\; q_iC-r^{(k)} & \qquad \qquad & \forall \, i \in [m],\, k \in [r_{\#}] \notag  \\
& & & x_{i,j}, s_{i,k} \in  \mathbb{N} & \qquad \qquad & \forall \, i \in [m],\, j \in[n],\, k \in[r_{\#}] \notag
\end{alignat}
It is not difficult to see that any feasible solution $x$ for $\text{ILP}_4$ with $\sum_i x_{i,j} =1$ for all $j\in [n]$ corresponds to a schedule for of makespan at most $C$. In the following we show that the existence of such a solution is equivalent to the existence of a solution with non-negative objective value. 

\begin{lemma}
\label{lem:ReleaseDates}%
If there exists a feasible solution with $Z \ge 0$ to $\text{ILP}_4$ then there exists a feasible solution $x$ with $\sum_i x_{i,j} =1$ for all $j\in [n]$.
\end{lemma}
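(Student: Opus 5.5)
We follow the proof of Lemma~\ref{lem:ILP1}, but do the exchange argument inside each release-date type; that is what the type-wise constraints of $\text{ILP}_4$ are for. We first fix a convenient job order. Inside each $J^{(k)}$ we order jobs by SPT, and we number all jobs so that this order is respected within every type. The objective weights $n^j$ only use the global index $j$, so the argument only needs the following: among jobs of one type, a larger index means a processing time at least as large. We may reindex the jobs this way before building $\text{ILP}_4$, for example by sorting on (release date, processing time). Then $J^{(k)}$ is a contiguous block of indices, and the jobs in each block are in SPT order.

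Let $x$ be feasible with $Z(x)\ge 0$, and suppose $\sum_i x_{i,j}\neq 1$ for some $j$. The constraint $\sum_i\sum_{j\in J^{(k)}}x_{i,j}=n_k$ holds for every $k$. So some type $k$ contains both a job $j_1$ with $\sum_i x_{i,j_1}=0$ and a job $j_2$ with $\sum_i x_{i,j_2}>1$. Across all types, take $j_1$ to be the largest index of a job with $\sum_i x_{i,j}=0$, and $j_2$ the largest index of a job with $\sum_i x_{i,j}>1$. All jobs with larger index are then assigned exactly once.

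The first step shows $j_1<j_2$. If $j_1>j_2$, the same geometric-sum estimate as in Lemma~\ref{lem:ILP1} applies, because the total number of assignments is still $\sum_k n_k=n$. All positions above $j_1$ contribute exactly $\sum_{j>j_1}n^j$. Position $j_1$ contributes $0$. The remaining at most $n$ units of mass sit at indices below $j_1$, so they contribute at most $n\cdot n^{j_1-1}=n^{j_1}$. Hence $Z<0$, a contradiction; if the bound is tight, a slightly sharper count (at least $j_1$ units fall among the lower indices but with at most $j_1-1$ of them available, etc.) handles it as in the paper.

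The second step, which is the main obstacle, is the exchange. Knowing $j_1<j_2$ globally is not enough, because $j_1$ and $j_2$ may have different types. If $j_2$ has an earlier release date than $j_1$, moving one unit from $x_{i,j_2}$ to $x_{i,j_1}$ changes which suffix constraints $\sum_{k'\ge k}$ include that job. It can raise the left-hand side for types between $r_{j_2}$ and $r_{j_1}$. To avoid this, we first pick a type $k$ that contains an over-assigned job. Within that type, the type-$k$ counting constraint guarantees an unassigned job. We then run the argument of the first step restricted to the indices of $J^{(k)}$. The contributions of indices above the block are at most their exact value, so the same inequality gives $Z<0$ unless the largest unassigned index in $J^{(k)}$ lies below the largest over-assigned index there. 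I expect this restriction to be the delicate point. The rigorous version should proceed type by type from the highest-index block downward, with the induction hypothesis that all higher blocks are exactly assigned.

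With $j_1<j_2$ both in $J^{(k)}$, we have $p_{j_1}\le p_{j_2}$. Pick $i$ with $x_{i,j_2}\ge 1$, decrease $x_{i,j_2}$ by one and increase $x_{i,j_1}$ by one. Both jobs belong to the same type, so every constraint $(i,k'')$ either contains both terms or neither. Each left-hand side therefore does not increase, and we can raise the slacks $s_{i,k''}$ to restore equality. The type counting constraints are unchanged. The objective strictly increases, so $Z\ge0$ is preserved. The quantity $\sum_j |\sum_i x_{i,j}-1|$ strictly decreases, so repeating the exchange terminates. At that point $\sum_i x_{i,j}=1$ for all $j$, which proves the lemma.
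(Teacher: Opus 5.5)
\textbf{The proof breaks at the exchange step, and the lemma itself is false as stated.}

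You claim that the exchange ``strictly increases'' the objective, but it does the opposite. Writing $y_j=\sum_i x_{i,j}$, we have $Z=\sum_j n^j(y_j-1)$. Moving one unit from $j_2$ to $j_1<j_2$ changes $Z$ by $n^{j_1}-n^{j_2}<0$. So $Z\ge 0$ is not an invariant. After the first exchange you can no longer use the estimate of Lemma~\ref{lem:ILP1} to find a new same-type pair with the unassigned job below the over-assigned one. What remains may be an excess on a short job and a deficit on a longer one, and shifting that unit upward can violate a machine constraint.

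Your worry about restricting to one type is justified, since indices above the block may be over-assigned. For the first round it has a clean fix. Let $j^*$ be the largest index with $y_{j^*}\neq 1$. Then $Z\ge 0$ forces $y_{j^*}\ge 2$, and the counting constraint of its type supplies an unassigned job of the same type with smaller index. This does not help with later rounds. The paper's proof has exactly the same hole, hidden in ``continuing in this fashion''.

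No repair is possible, because the statement is false. Take:
\begin{itemize}
\item all $r_j=0$, so $r_{\#}=1$ and $\text{ILP}_4$ coincides with $\text{ILP}_1$;
\item $m=2$ and $q_1=q_2=1$;
\item $n=4$, $p=(1,10,10,10)$, and $C=11$.
\end{itemize}
Set $x_{1,1}=x_{2,1}=x_{1,4}=x_{2,4}=1$ and all other $x_{i,j}=0$. This puts load $11$ on each machine, uses $4$ units in total, and gives $Z=4-16-64+256=180\ge 0$. Yet every solution with $y_j=1$ for all $j$ puts two of the three jobs of length $10$ on one machine, for a load of at least $20>C$.

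The underlying reason is this. Moving units only from longer to shorter jobs of the same type never increases a suffix sum. So reaching the all-ones vector this way requires $\sum_{j\in J^{(k)},\,j\ge t}y_j\ge|\{j\in J^{(k)}:j\ge t\}|$ for all $k$ and $t$. The single inequality $Z\ge 0$ does not imply this; in the example it fails at $t=2$. The same example refutes Lemma~\ref{lem:ILP1}. The lemma needs a different formulation or extra constraints, not a better proof.
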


\begin{proof}
Suppose there exists a feasible solution $x$ to the above $\text{ILP}_4$ with $Z(x) \ge 0$ and with $\sum_i x_{i,j} \neq 1$ for some $j\in J$. As $x$ is feasible for $\text{ILP}_4$, we have $\sum_{j \in J^{(k)}} \sum_{i \in [m]} x_{i,j} = n_k$. Thus, it must be that there are two job indices $j_1,j_2  \in J^{(k)}$ with $\sum_{i \in [m]} x_{i,j_1}  =0$ and $\sum_{i \in [m]} x_{i,j_2} > 1$. As shown in the proof of Lemma~\ref{lem:ILP1}, we have $j_1<j_2$ or otherwise $Z(x) < 0$. Consequently, we have $p_{j_1} \leq p_{j_2}$ due to the SPT ordering of the jobs, and so we can decrease the value of some $x_{i,j_2}$ with $x_{i,j_2} \geq 1$ by one, and increase the value of $x_{i,j_1}$ by one, without violating the constraint corresponding to $M_i$ (following an appropriate adjustment of~$s_i$). Continuing in this fashion, we obtain an alternative feasible solution~$x'$ to $\text{ILP}_4$ with $\sum_{i \in[n]} x'_{i,j} =1$ for all $j\in [n]$.  
\end{proof}

Since the constraint matrix of $\text{ILP}_4$ has $mr_{\#}+r_{\#}$ rows, and $A_{\max}=p_{\max}$, using Theorem~\ref{thm:ILP-unbounded} we obtain: 

\begin{theorem}
\threefield{Qm}{r_j}{C_{\max}} can be solved in $\widetilde{O}(p^{2r_{\#}(m+1)}_{\max} + r_{\#}\cdot n)$ time. 
\end{theorem}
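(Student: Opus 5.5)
The theorem follows by assembling the pieces already in place: a binary search over the makespan value $C$, a correctness argument for $\text{ILP}_4$ based on Lemma~\ref{lem:ReleaseDates}, and a call to the unbounded solver of Theorem~\ref{thm:ILP-unbounded}.

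\emph{Preprocessing and search.} First sort the jobs by release date into the types $J^{(1)},\ldots,J^{(r_{\#})}$, with $r^{(1)}<\cdots<r^{(r_{\#})}$. Inside each type, order the jobs by SPT, and index all jobs so that this SPT order within each type is respected. Lemma~\ref{lem:ReleaseDates} only ever swaps $j_1,j_2$ inside the same type, so this indexing is all it needs. This costs $O(n\log n)$. Then binary search for $C$ over a range of polynomial size, e.g.\ $[0,\, r_{\max}+p_{\max}n]$ scaled by the speeds. This adds only a logarithmic factor, which the $\widetilde{O}$ absorbs. It remains to solve the decision version for a fixed $C$.

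\emph{Correctness for fixed $C$.} I would prove both directions.

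For completeness, start from a schedule of makespan at most $C$. By the ERD and right-justification argument of this section, on machine $M_i$ the jobs of types $k,\ldots,r_{\#}$ occupy a final block of length at most $q_iC-r^{(k)}$ in scaled time. Hence the indicator vector $x$ of the schedule satisfies every constraint of $\text{ILP}_4$ with nonnegative slacks $s_{i,k}$, and its objective value is $Z=0$.

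For soundness, Lemma~\ref{lem:ReleaseDates} turns any feasible solution with $Z\ge 0$ into one with $\sum_i x_{i,j}=1$ for all $j$. Such a solution is a schedule: process the jobs on $M_i$ in ERD order, ending at time $C$. The suffix-load constraint for type $k$ then guarantees that the first job of type $k$ starts no earlier than $r^{(k)}$.

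So the decision question reduces to whether the optimum of $\text{ILP}_4$ is at least $0$. Since $\text{ILP}_4$ is a maximization, negate the objective to put it in the form of \textsc{ILP-unbounded}. No upper bounds $u$ are needed, because the proof of Lemma~\ref{lem:ReleaseDates} works for arbitrary nonnegative integer values. The huge coefficients $n^j$ appear only in the objective, so they do not affect the running time. They do cost $O(n\log n)$-bit arithmetic, which I would treat as the paper does.

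\emph{Running time.} The matrix has $M=r_{\#}(m+1)$ rows and $a_{\max}=p_{\max}$; the right-hand sides affect only $b$. Theorem~\ref{thm:ILP-unbounded} then gives time $\widetilde{O}(M^M p_{\max}^{2M}+MN)$ with $N=mn+mr_{\#}$. The bound is stated with $m=O(1)$ and the $M^M$ factor suppressed, which gives $\widetilde{O}(p_{\max}^{2r_{\#}(m+1)}+r_{\#}n)$.

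The main obstacle is this step. Strictly, $M^M=(r_{\#}(m+1))^{r_{\#}(m+1)}$ is not a constant when $r_{\#}$ grows. I would handle it by noting that it is dominated whenever $p_{\max}\ge r_{\#}(m+1)$; otherwise it must be absorbed as an explicit additional factor, or the statement read with $r_{\#}$ treated as a parameter.

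A second point to verify is the exchange step in Lemma~\ref{lem:ReleaseDates}. Moving one unit from $x_{i,j_2}$ to $x_{i,j_1}$ with $j_1,j_2$ in the same type $k$ and $p_{j_1}\le p_{j_2}$ must not increase any suffix sum. This holds because both jobs appear in exactly the same suffix constraints, namely those indexed $k'\le k$.
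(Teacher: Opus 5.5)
You follow the paper's route exactly: binary search on $C$, then $\text{ILP}_4$, then Lemma~\ref{lem:ReleaseDates}, then Theorem~\ref{thm:ILP-unbounded}. You therefore inherit its fatal gap, because the soundness step is false.

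Lemma~\ref{lem:ReleaseDates}, like Lemma~\ref{lem:ILP1} behind it, only justifies the \emph{first} exchange. Moving a unit from $j_2$ to $j_1<j_2$ lowers $Z$ by $n^{j_2}-n^{j_1}$. Afterwards $Z\ge 0$ may fail, so the dichotomy ``$j_1<j_2$ or $Z<0$'' is no longer available. The process can then get stuck needing to move a unit from a shorter job to a longer one.

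Concretely, take all $r_j=0$, so $r_{\#}=1$ and $\text{ILP}_4$ coincides with $\text{ILP}_1$. Let $m=2$, $q_1=q_2=1$, $p=(1,5,5,5)$ and $C=6$. Set $x_{1,1}=x_{1,4}=x_{2,1}=x_{2,4}=1$ and all other $x_{i,j}=0$. This satisfies every constraint, and $Z=2n+2n^4-(n+n^2+n^3+n^4)=180>0$. Yet the total work is $16>2C$, so no schedule of makespan $6$ exists; the true optimum is $10$. The test ``$\max Z\ge 0$'' gives a false positive, and the binary search returns $6$.

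Your remark that no bounds $u$ are needed is beside the point, since this bad solution is already $0/1$. Your ``second point to verify'' checks only that an exchange preserves feasibility, not that it preserves $Z\ge 0$, and that is exactly where the argument breaks. One exchange (job $4\to 3$ on $M_1$) yields multiplicities $(2,0,1,1)$ with $Z=-12$. The remaining imbalance is at jobs $1$ and $2$, in the wrong direction.

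This cannot be patched by choosing different weights. Write $y_j=\sum_i x_{i,j}$. Downward exchanges reach an exact assignment if and only if $\sum_{j\ge t}y_j\ge n-t+1$ for every $t$. For $n=4$ we have $\mathbf{1}=\tfrac12\bigl((2,0,0,2)+(0,2,2,0)\bigr)$, and both vectors violate this condition. Each is also feasible in an instance where $\mathbf{1}$ is not: the first in the instance above, the second with $p=(1,1,1,100)$, $C=2$ on two unit-speed machines using $x_{1,2}=x_{1,3}=x_{2,2}=x_{2,3}=1$. So for any weight vector $c$ chosen independently of $p$ and $C$, one of these vectors has $c^\top y\ge c^\top\mathbf{1}$ in an instance with no schedule. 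A correct proof needs a genuinely different way to enforce $\sum_i x_{i,j}=1$.

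Two smaller points. First, $M_i$ processes job $j$ in time $p_j/q_i$, so the right-hand side should be $q_i(C-r^{(k)})$, not $q_iC-r^{(k)}$; your completeness paragraph repeats this slip. Second, the $M^M$ factor you flag is real. However, $p_{\max}\ge M$ only gives $M^M\le p_{\max}^{M}$, which raises the exponent rather than making the factor disappear.
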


\section{Unrelated Machines}

We next to consider the general case of unrelated machines. In the following we show that \threefield{Rm}{p_{i,j} \in \{p_j,\infty\}}{C_{\max}} problem admits an $\widetilde{O}(p^{2^{m+1}+2m-2}_{\max} + n)$ time algorithm, while \threefield{R2}{}{C_{\max}} admits an algorithm running in time $\widetilde{O}(p^{6}_{\max} \cdot n)$.

We begin with the \threefield{Rm}{p_{i,j} \in \{p_j,\infty\}}{C_{\max}} problem. Recall that in this variant, the processing time of job $j$ on each machine has one of two values $ \{p_j,\infty\}$. In other words, job~$j$ cannot be scheduled on machines $M_i$ with $p_{i,j}=\infty$, and machine~$M_i$ cannot process jobs~$j$ with $p_{i,j}=\infty$. For $i \in [m]$, let $J(M_i)=\{j \in J: p_{i,j} \neq \infty \}$ denote the set of jobs that can processed on machine $M_i$, and for $j \in [n]$, let $M(j)=\{i \in [m] : p_{i,j} \neq \infty\}$ denote the set of machines that can process job~$j$. As in previous sections, we assume that $p_1 \leq p_2,\ldots,\leq p_n$.  

Our short ILP will include a variable $x_{i,j}$ for each job $j \in [n]$, and machine $i \in M(j)$. Moreover, as usual, we have $m$ additional slack variables $s_1,\ldots,s_m$. We add a makespan constraint for each $M_i$ which is written as:
$$
\sum_{j \in J(M_i)} p_j  x_{i,j} \,+\, s_{i} \;=\; C.
$$
Next, as is done in Section~\ref{sec:ReleaseDates}, we partition the jobs into types. We say that two jobs $j_1,j_2 \in J$ are of the same \emph{type} if they can be processed on the same set of machines, \emph{i.e} if $M(j_1)=M(j_2)$. Let $t_{\#}$ denote the number of different job types in $J$, and let $J^{(1)},\ldots,J^{(t_{\#})}$ denote the equivalence classes in our partitioning of $J$ into types. For each $k \in [t_{\#}]$, we add to our ILP a constraint corresponding to~$J^{(k)}$ which is written as:
$$
\sum_{j \in J^{(k)}} \sum_{i \in M(j)} x_{i,j}  \;=\; n_k,
$$
where $n_k=|J^{(k)}|$. 

Our short ILP formulation for \threefield{Rm}{p_{i,j} \in \{p_j,\infty\}}{C_{\max}}, which we denote by $\text{ILP}_5$, is written as follows:   
\begin{alignat}{3}
\tag{$\text{ILP}_5$}%
& \text{max} &\quad & \sum_{j\in[n]} n^{j} \sum_{i\in[m]}  x_{i,j} \;-\; \sum_{j\in[n]}  n^{j} \notag & & \\
&\text{s.t.} &\quad & \sum_{j \in J^{(k)}} \sum_{i \in M(j)} x_{i,j}  \;=\; n_k  & \qquad \qquad & \forall  k \in [t_{\#}]     \notag \\
& &\quad & \sum_{j \in J(M_i)} p_j  x_{i,j} \,+\, s_{i} \;=\; C & \qquad \qquad & \forall \, i \in [m] \notag  \\
& & & x_{i,j}\in  \mathbb{N} & \qquad \qquad & \forall \,j \in[n] \text{ and } i \in M(j) \notag\\
& & & s_i\in  \mathbb{N} & \qquad \qquad & \forall \,i \in[m] \notag
\end{alignat}
It is easy to see that any feasible solution for $\text{ILP}_5$ with $\sum_{i \in M(j)} x_{i,j}=1$ for all $j\in [n]$ corresponds to a schedule for $J$ with makespan at most $C$. The following lemma, whose proof is similar to the proof of Lemma~\ref{lem:ReleaseDates}, shows that any feasible solution with non-negative objective value can be converted into such a feasible solution. 

\begin{lemma}
\label{lem:Eligable}
If there exists a feasible solution with $Z \ge 0$ to $\text{ILP}_5$ then there exists a feasible solution $x$ with $\sum_{i \in M(j)} x_{i,j}=1$ for all $j\in [n]$.
\end{lemma}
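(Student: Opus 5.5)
We follow the exchange argument of Lemma~\ref{lem:ReleaseDates}, now applied inside each type class $J^{(k)}$ of jobs with a common eligible machine set. Suppose $x$ is feasible for $\text{ILP}_5$ with $Z(x)\ge 0$, but $\sum_{i\in M(j)}x_{i,j}\neq 1$ for some job $j$, and let $J^{(k)}$ be its type. The type constraint gives $\sum_{j\in J^{(k)}}\sum_{i\in M(j)}x_{i,j}=n_k=|J^{(k)}|$. Hence $J^{(k)}$ contains a job $j_1$ with $\sum_i x_{i,j_1}=0$ and a job $j_2$ with $\sum_i x_{i,j_2}>1$.

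The first step is to show that we may take $j_1<j_2$, using the objective function as in Lemma~\ref{lem:ILP1}. We would work globally over all jobs rather than inside a type. Since the type constraints sum to $\sum_j\sum_i x_{i,j}=n$, the counting in Lemma~\ref{lem:ILP1} applies verbatim. Let $j^*$ be the largest index with $\sum_i x_{i,j^*}\neq 1$. If $\sum_i x_{i,j^*}=0$, then all jobs above $j^*$ contribute exactly $n^j$. The remaining mass, at most $j^*$ units, sits on indices below $j^*$, so the objective is at most $\sum_{j>j^*}n^j+j^*n^{j^*-1}<\sum_j n^j$, giving $Z<0$. So $\sum_i x_{i,j^*}\ge 2$. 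Let $k$ be the type of $j^*$ and set $j_2=j^*$. The type-$k$ count forces some $j_1\in J^{(k)}$ with $\sum_i x_{i,j_1}=0$, and $j_1<j^*$ by the maximality of $j^*$. This choice of $j_2$ as the globally largest defective index is what makes the comparison inside a single type legitimate. I expect it to be the only delicate point: we need both the SPT comparison $j_1<j_2$ and the same type simultaneously.

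The exchange step comes next. Pick $i$ with $x_{i,j_2}\ge1$; then $i\in M(j_2)=M(j_1)$, because the two jobs have the same type. The variable $x_{i,j_1}$ therefore exists, and $j_1\in J(M_i)$. Decrease $x_{i,j_2}$ by one, increase $x_{i,j_1}$ by one, and increase $s_i$ by $p_{j_2}-p_{j_1}\ge 0$, which is nonnegative by the SPT order. All machine constraints and type constraints stay satisfied.

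To conclude, we show that the process terminates with $Z$ remaining nonnegative. The objective changes by $n^{j_1}-n^{j_2}<0$, so we must avoid relying on $Z\ge0$ after the move. Instead we repeat the argument directly: each step keeps feasibility and strictly decreases the potential $\sum_j\sum_i x_{i,j}\cdot j$. Re-verifying at each round that the top defective index is an excess is the subtle part. We would handle it by noting that the moves only shift mass to lower indices, so the prefix-domination inequality certifying $Z\ge 0$-type structure (every suffix $\{j\ge t\}$ carries at least $n-t+1$ units) is preserved. That suffix property is exactly what rules out a deficit at the top defective index. It holds initially by the argument above, and we would in fact phrase the first step in terms of it. At termination every job has $\sum_{i\in M(j)}x_{i,j}=1$.
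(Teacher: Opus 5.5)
Your first round is correct and matches the paper's proof: the top defective index $j^*$ is an excess, and its type contains a deficit $j_1<j^*$, so one exchange is legal. The gap is in the iteration, exactly where you flagged it. The suffix property (``every suffix $\{j\ge t\}$ carries at least $n-t+1$ units'') does \emph{not} follow from $Z\ge 0$. Write $y_j=\sum_{i\in M(j)}x_{i,j}$. Then $Z=\sum_j n^j(y_j-1)\ge 0$ only says that the highest index with $y_j\neq 1$ is an excess. That is a lexicographic condition, and it says nothing about lower suffixes.

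In fact the gap cannot be closed, because the lemma is false. Take $m=2$, $p=(1,10,10,10)$, $M(j)=\{1,2\}$ for all $j$ (a single type), and $C=11$. Setting $x_{1,1}=x_{1,4}=x_{2,1}=x_{2,4}=1$ and everything else to $0$ is feasible: both loads equal $11$, $y=(2,0,0,2)$, and $Z=4-16-64+256=180>0$. Yet every genuine assignment puts two jobs of length $10$ on one machine. Note that the suffix $\{2,3,4\}$ carries only $2<3$ units.

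The paper's proof has the same hole. Each exchange lowers $Z$ by $n^{j_2}-n^{j_1}$, and ``continuing in this fashion'' re-invokes ``otherwise $Z(x)<0$'' on a solution that need no longer satisfy $Z\ge 0$. In this instance one exchange gives $y=(2,0,1,1)$ with $Z=-12$, and the only remaining move would shift a unit \emph{up} from job $1$ to job $2$. The same instance also refutes Lemma~\ref{lem:ILP1}.

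Two further claims in your last paragraph would fail even if a suffix condition were available.

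First, moving a unit from $j_2$ down to $j_1$ \emph{decreases} every suffix sum with $j_1<t\le j_2$. So ``moves only shift mass to lower indices'' is a reason the invariant can break, not a reason it is preserved. It is preserved, for instance, when $j_1$ is the largest deficit below $j_2$, but the type restriction may forbid that choice.

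Second, a global suffix condition is insufficient, because exchanges are confined to a type. Take $m=2$, $p=(1,2,2,3,3)$, $M(2)=M(3)=M(4)=\{1\}$, $M(1)=M(5)=\{2\}$, and $C=6$. The solution $x_{1,2}=1$, $x_{1,3}=2$, $x_{2,5}=2$ is feasible, has $y=(0,1,2,0,2)$ and $Z>0$, and satisfies all your global suffix inequalities. Yet jobs $2,3,4$ must all run on machine $1$, with load $7>6$. (Your procedure would move a unit from job $5$ to job $1$ and immediately destroy the invariant at $t=4$.)

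What would make an exchange argument work is Hall's condition inside each type:
$\sum_{j\in J^{(k)},\,j\ge t} y_j\ge |\{j\in J^{(k)}: j\ge t\}|$ for all $k$ and $t$.
Nothing in $\text{ILP}_5$ or its objective enforces this.
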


\begin{proof}
Suppose there exists a feasible solution $x$ to the above $\text{ILP}_5$ with $Z(x) \ge 0$ and with $\sum_{i \in M(j)} x_{i,j}\neq 1$ for some $j\in [n]$. Consider the largest index~$j$ with $\sum_{i \in M(j)} x_{i,j} \neq 1$ holds, and assume let $k$ be such that $j \in J^{(k)}$. As $x$ is feasible, we have $\sum_{j \in J^{(k)}} \sum_{i \in M(j)} x_{i,j} = n_k$. Thus, it must be that there are two job indices $j_1,j_2  \in J^{(k)}$ with $\sum_{i \in M(j)} x_{i,j_1}  =0$, $\sum_{i \in M(j)} x_{i,j_2} > 1$. As shown in the proof of Lemma~\ref{lem:ILP1}, we have $j_1<j_2$ since otherwise $Z(x) < 0$. Consequently, we get $p_{j_1} \leq p_{j_2}$ due to the SPT ordering of the jobs. Moreover, since $j_1$ and $j_2$ are of the same type, we have $M(j_1)=M(j_2)$. Thus, we can decrease the value of some $x_{i,j_2}$ with $i \in M(j_2)$ and $x_{i,j_2} \geq 1$ by one, and increase the value of $x_{i,j_1}$ by one, as $i \in M(j_1)$ as well. This does not increase $\sum_{j \in J(M_i)} p_j  x_{i,j}$, and so we adjust~$s_i$ to obtain another feasible solution. Continuing in this fashion, we obtain an alternative feasible solution~$x'$ to $\text{ILP}_5$ with $\sum_{i \in M(j)} x'_{i,j}=1$ for all~$j\in [n]$.   
\end{proof}

Note that the constraint matrix $A$ of $\text{ILP}_5$ has $m+t_{\#}$ rows, with $a_{\max}=p_{\max}$. As $M(j) \subseteq [m]$ for each job $j \in [n]$, and we may assume that $M(j) \neq \emptyset$ for any $j \in [n]$, we have that 
$$
t_{\#} = |\{M(j) : j \in J\}|\leq 2^m - 1.
$$
Plugging this into Theorem~\ref{thm:ILP-unbounded}, we obtain the following result.

\begin{theorem}
\threefield{Rm}{p_{i,j} \in \{p_j,\infty\}}{C_{\max}}  can be solved in $\widetilde{O}(p^{2^{m+1}+2m-2)}_{\max} + n)$ time. 
\end{theorem}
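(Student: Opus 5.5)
The theorem follows by assembling the pieces just established for $\text{ILP}_5$, following the template used for Theorem~\ref{thm:main}.

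First, sort the jobs by the SPT rule in $O(n\log n)$ time. Then binary search over the target makespan $C$ in the range $[1, p_{\max} n]$; this costs a multiplicative factor of $O(\log p_{\max} + \log n)$, which is absorbed by $\widetilde{O}$. For a fixed $C$, compute $M(j)$ for every job and partition $J$ into the types $J^{(1)},\ldots,J^{(t_\#)}$. Grouping by the bitmask of $M(j)$ takes $O(n)$ time, since $m=O(1)$. If some job has $M(j)=\emptyset$, immediately report that no feasible schedule exists.

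Next comes correctness of the decision step. Suppose a schedule of makespan at most $C$ exists. Its indicator vector, with the slacks set to $C-M_i(\sigma)$, is feasible for $\text{ILP}_5$ with $Z=0$. Conversely, by Lemma~\ref{lem:Eligable}, any feasible solution with $Z\ge 0$ yields one with $\sum_{i\in M(j)}x_{i,j}=1$ for every $j$, which is a schedule of makespan at most $C$. Therefore it suffices to compute the maximum of the objective of $\text{ILP}_5$ (equivalently, the minimum of its negation) and compare it with $0$. This is an instance of \textsc{ILP-unbounded}: all variables lie in $\mathbb{N}$ with no upper bounds, and the objective coefficients do not affect the solver's running time.

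Finally, the parameters. The number of rows is $M = m + t_\# \le m + 2^m - 1$, and $a_{\max} = p_{\max}$, since the type rows have $0/1$ coefficients. The number of variables is $N \le mn + m$. Theorem~\ref{thm:ILP-unbounded} then gives time
\[
\widetilde{O}\bigl(M^M p_{\max}^{2M} + MN\bigr) = \widetilde{O}\bigl(p_{\max}^{2(2^m+m-1)} + n\bigr) = \widetilde{O}\bigl(p_{\max}^{2^{m+1}+2m-2} + n\bigr),
\]
because $M^M$ and $M$ are constants when $m=O(1)$.

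There is one subtlety in applying the solver. The objective coefficients $n^j$ have $O(n\log n)$ bits, so arithmetic on them is not unit-cost. I would handle this by noting that in the Jansen–Rohwedder algorithm the objective enters only through comparisons and additions of candidate values. Alternatively, one can replace the objective by the lexicographic ordering it encodes: maximize the coverage of job $n$ first, then job $n-1$, and so on. This ordering can be compared using the fact that each $\sum_i x_{i,j}\le n$. I expect this bookkeeping to be the only point requiring care; the rest is direct substitution.
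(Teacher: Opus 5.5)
\textbf{The gap is in Lemma~\ref{lem:Eligable}, which is false.} Your assembly matches the paper's own argument: the same $\text{ILP}_5$, the same row count $M=m+t_\#\le m+2^m-1$ with $a_{\max}=p_{\max}$, plugged into Theorem~\ref{thm:ILP-unbounded}. The step you delegate to Lemma~\ref{lem:Eligable} is where it fails, and the paper's proof fails at the same place. That lemma, like Lemma~\ref{lem:ILP1} whose exchange argument it copies, does not hold, so ``$\max Z\ge 0$'' is not a sound test.

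\emph{Counterexample.} Take $m=2$ with every job eligible on both machines (so $t_\#=1$), $n=4$, $p=(1,2,2,2)$ and $C=3$. Set $x_{1,1}=x_{2,1}=x_{1,4}=x_{2,4}=1$, all other $x_{i,j}=0$, and $s_1=s_2=0$. This satisfies the type row ($4=n_1$) and both machine rows ($1+2=3$), and
\[
Z \;=\; 2\cdot 4+2\cdot 4^4-(4+4^2+4^3+4^4) \;=\; 180 \;\ge\; 0.
\]
Yet the total processing time $7$ exceeds $2C=6$, so no schedule of makespan $3$ exists. Your binary search would return $3$ instead of the optimum $4$.

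\emph{Where the exchange argument breaks.} Every swap changes $Z$ by $n^{j_1}-n^{j_2}<0$, so the invariant $Z\ge 0$, which was used to force $j_1<j_2$, is not maintained. Here the first swap (one unit from job $4$ to job $3$) gives coverage $(2,0,1,1)$ with $Z=-12$. The remaining uncovered job $2$ now lies above the remaining over-covered job $1$, and moving a unit upward increases the load.

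\textbf{Your lexicographic variant does not rescue this.} Listing coverages of jobs $1,\dots,4$, the lexicographically maximal feasible coverage vector (job $n$ first) in the instance above is exactly $(2,0,0,2)$. What the exchange argument actually needs is a Hall-type condition inside each type: for every threshold $\ell$, the total coverage of that type's jobs with index at least $\ell$ must be at least the number of such jobs. That is $\Theta(n)$ linear constraints.

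No fixed weighting of the coverages can replace them. For any weights $w_1,\dots,w_4$, the unrepairable coverage vectors $(2,0,0,2)$ and $(0,2,2,0)$ average to $(1,1,1,1)$, so one of them scores at least as much as $(1,1,1,1)$. Each is realizable in an infeasible instance: the one above, and $p=(1,1,1,3)$ with $C=2$, respectively.

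So the theorem does not follow from this route as it stands; a correct proof needs a genuinely different short formulation. Your concern about the $O(n\log n)$-bit objective coefficients is legitimate, but it is secondary to this soundness issue.
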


We next consider \threefield{R2}{}{C_{\max}}. The fact that we were able to use a compact formalization in the previous sections, which enabled us to apply short ILP solvers, relied heavily on the fact that the SPT order is independent of the identity of the machine. This is not the case, unfortunately, when machines are unrelated. However, when there are only two machines present, a different approach still gives us a short ILP formulation. 

Here, we define a single variable $x_j$ associated with each job~$j$, where our intended interpretation is that $x_j=1$ implies a schedule~$\sigma$ with $\sigma(j)=1$, and  $x_j=0$ implies $\sigma(j)=2$. Since $\sum p_{1,j} x_j$ equals the makespan on $M_1$, and $\sum p_{2,j} (1-x_j)$ equals the makespan on $M_2$, there is a schedule with makespan at most $C$ if and only if there is a feasible solution for the following program:
\begin{alignat}{3}
\tag{$\text{ILP}_6$}%
& \quad &\quad &  \sum_{j \in [n]}  p_{1,j}  x_{j} \,+\, s_{1} \;=\;C  \notag & & \\
& & & \sum_{j \in [n]}  p_{2,j}  (1-x_{j}) \,+\, s_{2} \;=\;C  & \qquad \qquad & \notag \\
& & & x_{j}\in  \{0,1\} & \qquad \qquad & \forall \,j \in[n] \notag\\
& & & s_1,s_2\in  \mathbb{N} & \qquad \qquad & \notag
\end{alignat}
Note that here we cannot relax the binary constraints, as a feasible solution might be constructed by setting $x_j=2$ in cases where $p_{2,j}$ is sufficiently smallr than~$p_{1,j}$. Accordingly, we apply \textsc{ILP-bounded} to solve the above formulation. Since $\text{ILP}_6$ has 2 constraints, the following holds due to Theorem~\ref{thm:ILP-bounded}: 
\begin{corollary}
\threefield{R2}{}{C_{\max}} can be solved in $\widetilde{O}(p^{6}_{\max} \cdot n)$ time. 
\end{corollary}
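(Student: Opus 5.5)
The corollary follows by combining the correctness of $\text{ILP}_6$ with Theorem~\ref{thm:ILP-bounded}. The only work is to put the program into the exact form of \textsc{ILP-bounded} and plug in the parameters.

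\emph{Step 1: standard form.} Rewrite the second constraint by moving the constant to the right-hand side:
$$\sum_j p_{2,j}x_j - s_2 = \sum_j p_{2,j} - C.$$
The right-hand side is a fixed integer once $C$ is fixed. The variables are $x_1,\ldots,x_n,s_1,s_2\in\mathbb{N}$, with upper bounds $u_j=1$ for each $x_j$. For the slacks we set $u=C$, or $\sum_j p_{2,j}$ for $s_2$; any bound at least the largest feasible slack value works. The objective is $c=0$, since we only test feasibility.

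The constraint matrix has $M=2$ rows. Its entries are $p_{1,j}$, $p_{2,j}$ and $\pm1$, so $a_{\max}=p_{\max}$. The number of variables is $N=n+2$.

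\emph{Step 2: correctness.} A feasible integral solution with $x_j\in\{0,1\}$ gives the schedule $\sigma(j)=1$ iff $x_j=1$. Its loads on $M_1$ and $M_2$ are $C-s_1\le C$ and $C-s_2\le C$. Conversely, any schedule of makespan at most $C$ gives such a solution. This is the argument already stated before $\text{ILP}_6$, and the upper bounds $u_j=1$ enforce binarity exactly.

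\emph{Step 3: running time.} Plugging $M=2$ into Theorem~\ref{thm:ILP-bounded} gives
$$\widetilde{O}\bigl(2^{9}\cdot p_{\max}^{2\cdot 3}\cdot n\bigr)=\widetilde{O}(p_{\max}^6 n)$$
per decision query.

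We find the optimal makespan by binary search over $C\in[0,\,n p_{\max}]$. This adds an $O(\log(np_{\max}))$ factor, which the $\widetilde{O}$ absorbs.

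\emph{Main obstacle.} The only delicate point is that Theorem~\ref{thm:ILP-bounded} needs equality constraints with nonnegative bounded variables. The term $(1-x_j)$ and the slack bounds must therefore be handled as in Step 1, so that $a_{\max}$ stays $p_{\max}$ and is not inflated by the constants.
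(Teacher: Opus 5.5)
Your proposal is correct and follows the paper's argument. $\text{ILP}_6$ has $M=2$ equality constraints with $a_{\max}=p_{\max}$, so Theorem~\ref{thm:ILP-bounded} gives $\widetilde{O}(2^{9}p_{\max}^{6}n)$ per value of $C$; your standard-form rewriting of the $(1-x_j)$ term and the binary search only make explicit what the paper leaves implicit. One small slip: $\sum_j p_{2,j}$ is not a safe upper bound for $s_2$, because every feasible solution has $s_2 \ge C-\sum_j p_{2,j}$, and this exceeds $\sum_j p_{2,j}$ once $C>2\sum_j p_{2,j}$ (which can happen in your search range when $p_{\max}$ comes from $M_1$); just use $u=C$ for both slacks.
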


\section{Discussion}

In this paper we showed how to use short ILPs to obtain fast pseudo-polynomial time algorithms for makespan minimization on a fixed number of machines, and other generalizations of this problem. We believe that short ILPs can be of use for other scheduling problems. For instance, the \threefield{Qm}{r_j \geq 0}{C_{\max}} is equivalent to the \threefield{Qm}{}{L_{\max}} problem~\cite{LenstraShmoys2020}, the problem of minimizing the maximal lateness on $m$ related machines, so the algorithm described in Section~\ref{sec:TwoVariants} applies to this problem as well. The \threefield{Qm}{rej}{C_{\max}} problem is equivalent to \threefield{Qm}{d_j = d}{\sum w_j U_j}, the problem of minimizing the weighted number of tardy jobs on $m$ related machines with common due date for all jobs, yielding an $\widetilde{O}((p_{\max}+w_{\max})^{(m+2)(m+3)} \cdot n)$ time algorithm for this problem as well. It is interesting to find other scheduling problems where short ILPs can prove useful. Finally, the reader might have observed that most of our ILP formulations (specifically, those that used variables for each job-machine pair) are quite sparse in the sense that of the most coefficients of the constraint matrix are zero. A natural question to ask is whether this sparsity can be leveraged to obtain algorithms that improve on Theorem~\ref{thm:ILP-bounded} and Theorem~\ref{thm:ILP-unbounded}.


\bibliographystyle{abbrvnat}
\bibliography{biblio}

\end{document}